\documentclass[12pt, draftclsnofoot, onecolumn]{IEEEtran}
 \usepackage{amsmath,amssymb}
 \usepackage{subfigure}
 \usepackage{graphicx,graphics,color,psfrag}
 \usepackage{cite,balance}
 \usepackage{caption}
 \captionsetup{font={small}}
 \allowdisplaybreaks
 \usepackage{algorithm}
 \usepackage{accents}
 \usepackage{amsthm}
 \usepackage{bm}
 \usepackage{algorithmic}
 \usepackage[english]{babel}
 \usepackage{multirow}
 \usepackage{enumerate}
 \usepackage{cases}
 \usepackage{stfloats}
 \usepackage{dsfont}
 \usepackage{color,soul}
 \usepackage{amsfonts}
 \usepackage{cite,graphicx,amsmath,amssymb}
 \usepackage{subfigure}
 \usepackage{fancyhdr}
 \usepackage{hhline}
 \usepackage{graphicx,graphics}
 \usepackage{array,color}
 \usepackage{amsmath}

\newtheorem{theorem}{Theorem}

\newtheorem{lemma}{Lemma}

\newtheorem{remark}{\bf Remark}

\def\E{\mathsf{E}}

\def\phi{\varphi}

\def\l{\left}
\def\r{\right}
\def\({\left(}
\def\){\right)}

\setcounter{page}{1}



\def\b0{{\mathbf{0}}}








\newcommand{\nn}{\nonumber}





\setlength{\topskip}{-10pt}
\addtolength{\belowcaptionskip}{-2mm}


\begin{document}

\addtolength{\textfloatsep}{-2pt}
\textheight=24cm

\title{\huge Wirelessly Powered Backscatter Communication Networks: Modeling, Coverage and Capacity}

\author{Kaifeng Han and Kaibin Huang  \thanks{\setlength{\baselineskip}{13pt} \noindent K. Han and K. Huang are with the Dept. of Electrical and Electronic Engineering at The  University of  Hong Kong, Hong Kong (Email: kfhan@eee.hku.hk, huangkb@eee.hku.hk). Part of this work has been accepted by IEEE Globecom 2016.}}
\maketitle
\begin{abstract} Future Internet-of-Things (IoT) will connect billions of small computing devices embedded in the environment and support their device-to-device (D2D) communication. Powering these massive number of embedded  devices is a key challenge of designing IoT since batteries increase the devices' form factors and battery  recharging/replacement is difficult. To tackle this challenge, we propose a novel network architecture that enables D2D communication between passive nodes by integrating wireless power transfer and backscatter communication, which is called a \emph{wirelessly powered backscatter communication} (WP-BackCom)  network. In this network, standalone \emph{power beacons} (PBs) are deployed for wirelessly powering nodes by beaming unmodulated carrier signals to targeted nodes. Provisioned  with a backscatter antenna, a node transmits data to an intended receiver by modulating and reflecting a fraction of a carrier signal. Such transmission by backscatter  consumes orders-of-magnitude less power than a traditional radio. Thereby, the dense deployment of low-complexity PBs with high transmission power can power a large-scale IoT. In this paper, a WP-BackCom network is modeled as a random Poisson cluster process in the horizontal plane where PBs are Poisson distributed and active ad-hoc pairs of backscatter communication nodes with fixed separation distances form random clusters centered at PBs. The backscatter nodes can harvest energy from and backscatter carrier signals transmitted by PBs. Furthermore, the transmission power of each node depends on the distance from the associated PB. Applying stochastic geometry, the network coverage probability and transmission capacity are derived and optimized as functions of backscatter parameters, including backscatter duty cycle, reflection coefficient, and the PB density. The effects of the parameters on network performance are characterized.
\end{abstract}

\section{Introduction}

\emph{Internet-of-Things} (IoT) is envisioned as a future network to connect billions of small computing devices embedded in the environment (e.g., walls and furniture) and implanted in bodies and enable their \emph{device-to-device} (D2D) wireless communication. Powering a massive number of such devices remains a key design challenge for IoT.  Batteries add to devices' weights and form factors and battery recharging/replacement increases  the maintenance cost if not infeasible.  To handle this challenge, we propose a novel network architecture that enables large-scale passive IoT deployment by seamlessly integration of wireless power transfer (WPT) \cite{Huang:CuttingLastWiress:2014, bi:PowerCommunication:2014} and low-power backscattering communication \cite{1boyer2014backscatterComm.},  called a \emph{wirelessly powered backscatter communication} (WP-BackCom) network. Specifically, \emph{power beacons} (PBs) that are  stations  dedicated for WPT \cite{HuangLauArXiv:EnablingWPTinCellularNetworks:2013} are deployed for wirelessly powering dense backscatter D2D  links  and each node transmits data by reflecting and modulating  the carrier signals sent by PBs.  This work aims at modeling and analyzing the performance of a WP-BackCom network. To be specific, the network is modeled as  Poisson cluster processes and its coverage and capacity are analyzed using stochastic geometry.

\subsection{Backscatter Communication}
Backscatter communication refers to a design where a radio device transmits data via reflecting and modulating an incident radio frequency (RF) signal by adapting the level of antenna impedance mismatch to vary the reflection coefficient and furthermore harvests energy from the signal for operating the circuit  \cite{42stockman1984communication, 1boyer2014backscatterComm.}. Backscatter devices do not require oscillators to generate carrier signals that are obtained from the air instead. Furthermore, using the simple analog modulation scheme, the device requires no analog-to-digital converters (ADCs) used in the case of digital modulation. As a result of these features, a backscatter transmitter consumes power orders-of-magnitude less than a conventional radio. Traditionally, backscatter communication is widely used  in the application of radio-frequency Identification (RFID) where a reader powers and communicates with a RFID tag over a short range no more than typically of several meters \cite{umeda2006950, chawla2007overview}.  Recent years have seen the development of more sophisticated backscatter communication systems and advancement in the underpinning communication theory \cite{G.Yang:backscattercommun.:2015, Wang:2012:EfficientReliable, saad2014physical, kimionis2014increased, dardari2010ultrawide}.  To reduce tag complexity, estimation of forward (for WPT) and backward (for information transfer) channels is usually performed at the reader that transmits and receives a training signal over different antennas. The close-loop propagation makes it difficult to estimate the two channels separately, which are required for transmit and receive beamforming. A  channel-estimation algorithm for energy  beamforming is proposed in \cite{G.Yang:backscattercommun.:2015} that alleviates this difficulty by exploiting the structure of a backscatter channel to improve WPT efficiency.  Multiple-access in a backscatter network is studied in \cite{Wang:2012:EfficientReliable}. By treating collision of backscatter nodes as a sparse code, a novel  approach  is proposed in \cite{Wang:2012:EfficientReliable} that resolves the collisions using sparse sensing algorithm and rateless-code decoding algorithm. The work in \cite{saad2014physical} addresses the issue of security in backscatter communication by proposing the technique of injecting a noise-like signal into the carrier signal sent by a reader to avoid eavesdropping of the backscatter signal.  In \cite{kimionis2014increased}, a variation of the backscatter communication system with an improved range is proposed, which has the key feature of detaching the carrier transmitter from the reader and placing it near the tag. Based on the new system design, the bit-error-rate performance of various modulation schemes is analyzed. Last, a new system based on ultra-wide bandwidth (UWB) is developed in \cite{dardari2010ultrawide}. The advantage of such a system is that based on time-hopping spread spectrum, the huge processing gain arising from signaling using ultra-sharp pulses eliminates interference between coexisting backscatter communication links.

The traditional reader-tag configuration   is unsuitable for IoT since typical nodes are energy-constrained and may not be able to wirelessly power other nodes for communications over sufficiently  long distances. This motivated the design of a backscatter communication system powered by RF energy harvesting,  where the transmission of a backscatter node relies on harvesting energy and reflecting incident RF signals from the ambient environment such as TV, Wi-Fi and cellular signals \cite{45liu2013ambient, Kellogg:2014:WIFIBackscatter, Liu2014:Enabling}. From the perspective of energy-source  systems (e.g., TV, Wi-Fi or cellular systems),  the effect of  coexisting  backscatter transmitters  is to generate addition multi-paths.  The systems have two drawbacks. First, to avoid mutual interference between coexisting systems, backscatter links have much lower data rates (using the on/off-keying modulation) than the energy-source links. Second,  backscatter communication networks based on ambient RF energy harvesting do not have scalability due to their dependance on other  networks as energy sources. Thus, they  may not be suitable for implementing large-scale dense  IoT with relatively high rates. This motivates the design of WP-BackCom network  architecture where WPT can deliver power much higher than that by energy harvesting and low-complexity backhaul-less PBs are widely deployed to power dense passive D2D links.

\subsection{Modeling the WP-BackCom Network}

To study the performance of a large-scale WP-BackCom network, we adopt stochastic geometry to design and analyze wireless networks (a survey can be found in e.g., \cite{HaenggiAndrews:StochasticGeometryRandomGraphWirelessNetworks}). Among various types of spatial point processes, \emph{Poisson cluster process} (PCP), where \emph{daughter} points form random clusters centered at points from a \emph{parent} \emph{Poisson point process} (PPP),  are commonly used for modeling wireless networks with random cluster topologies arising from geographical factors or protocols for medium access control \cite{GantiHaenggi:OutageClusteredMANET:2009, gulati2010statistics}. In particular, in recent work on random  networks, PCPs have been frequently used to model the phenomenons  of user clustering at hotspots \cite{chun2015modeling} and the clustering of small-cell base stations (BSs) around macro-cell BSs \cite{suryaprakash2015modeling} for heterogeneous networks or distributed-node clustering in D2D networks \cite{Chong:2016Modeling}. In this work, the  WP-BackCom network  is also modeled as a PCP where PBs form the parent PPP and backscatter nodes are the clustered daughter points. The clustering phenomenon arises from the fact that only nodes lying within a given distance from the nearest PB can harvest sufficient energy for operating their circuits and powering their transmissions. In other words, active backscatter nodes are effectively  attracted to PBs, thereby forming clusters with PBs at their centers. The location of a node can be uniformly distributed in a circle or normally scattered with a given variance around the centered affiliated PB, yielding the Matern and Thomas cluster processes, respectively. The use of dedicated stations similar to PBs instead of relaying on readers  for powering backscatter nodes is also observed in \cite{kimionis2014increased} to improve the link bit-error-rate performance.
Relying on WPT from PBs, nodes' transmission power depends on their distances from the associated PBs. In contrast, in the conventional network models,  transmission power of BSs/nodes is independent of their locations. The location-dependent transmission power in the WP-BackCom network as well as other practical factors (e.g., circuit power consumption, backscatter duty cycle, reflection coefficient, etc.) introduce new challenges for network performance analysis.

Recently, stochastic geometry has been also applied to model large-scale energy harvesting (including WPT) networks  building on existing network architectures including cellular networks \cite{HuangLauArXiv:EnablingWPTinCellularNetworks:2013, che2015spatial}, relay networks \cite{krikidis2014simultaneous, mekikis2014wireless}, heterogeneous networks \cite{Dhillon:2014Fundamental}, cognitive networks \cite{Hossain:2015Cognitive},  and general networks with mutually repulsive nodes \cite{Wang:2016Self}. In particular, the WP-BackCom network  is  similar to cellular networks with WPT considered in \cite{HuangLauArXiv:EnablingWPTinCellularNetworks:2013, che2015spatial} where PBs are deployed to power passive nodes'  transmissions. Nevertheless, the current work faces new theoretical challenges arising from a new network topology based on a PCP instead of PPPs in the prior work. Furthermore, practical factors arising from backscatter also introduce a new dimension for network performance optimization.

\subsection{Summary of Contributions and Organization}

The  mentioned model of random network topology corresponds to a \emph{normal}  WP-BackCom network with the nodes density exceeding that of PBs. Consider the models of backscatter transmission and channels. Time is divided into slots, each of which is further divided into mini-slots. For an arbitrary slot, each transmitting node performs backscatter transmission as well as energy harvesting in a randomly selected mini-slot and only energy harvesting in other mini-slots. The fraction of time used for transmission is referred to as the \emph{duty cycle}. Furthermore, for transmission, the fraction of backscattered incident energy is called the \emph{reflection coefficient}. The condition required for backscatter transmission is that the total energy harvested by a transmitting node within each slot exceeds the energy consumed by the circuit, introducing a \emph{circuit-power constraint}. Next, a PB beams  a carrier signal with fixed power to each target node over a free-space channel with only path loss. Moreover, the pairs of communicating nodes have unit distances and their channels have both fading and path loss.

To the best of our knowledge, this paper presents the first attempt to model and analyze a large-scale backscatter communication network using stochastic geometry.  The theoretic  contributions of this paper are summarized as follows.

\begin{enumerate}
\item The interference distribution  is  analyzed. Given the PCP, the  interference power for a typical receiver is decomposed into \emph{intra-cluster} and \emph{inter-cluster} interference.  Their characteristic functionals are derived for the normal WP-BackCom network and their product gives that for the total interference power.

\item The distribution of node-transmission power is also investigated. A key variable characterizing the transmission power of a transmitting node is the probability, called the \emph{power-outage probability},  that the circuit-power constraint is not satisfied, corresponding to insufficient energy for turning on the node. The probability is derived in closed-form.

\item Consider the network coverage measured by the metric of \emph{success probability} defined as the probability that the transmission over a typical D2D link is successful. The results in 1) and 2) are applied to derive a lower bound on the success probability. The result reveals that the success probability is a convex function of the reflection coefficient and can be thus maximized over this parameter.

\item Consider the network  transmission capacity defined as the density of reliable and active backscatter D2D links. In the regime of almost-full network coverage (with close-to-one success probability), the capacity is shown to be a convex function of the duty cycle and a monotone decreasing function of the reflection coefficient for most of range. Based on the result, the network capacity is optimized.

\item In addition, we also consider a network with \emph{dense} PBs. The corresponding model is modified from the counterpart for the normal network by denoting the parent points as transmitting nodes and the daughter points as PBs. Then the results for the normal network are extended to the network with dense PBs.

\end{enumerate}

The reminder of this paper is organized as follows. The mathematical models and performance metrics are described in Section \uppercase\expandafter{\romannumeral2}. Section \uppercase\expandafter{\romannumeral3} presents the interference characteristic functionals and transmission power distribution for WP-BackCom networks. The network coverage and capacity are analyzed in Section \uppercase\expandafter{\romannumeral4} and \uppercase\expandafter{\romannumeral5}, respectively. Simulation results are presented  in Section \uppercase\expandafter{\romannumeral6} followed by concluding remarks  in Section \uppercase\expandafter{\romannumeral7}.

{\bf Notation:} Given $X\in\mathds{R}^2$, $|X|$ denotes the Euclidean distance from $X$ to the origin. For a set $\mathcal{X}$, $|\mathcal{X}|$ yields its  cardinality. The set-subtraction operator is denoted as  $\backslash$.

\begin{table}[t]

\caption{Summary of Notation}\label{tab:para_system}
\begin{center}
\vspace{-5mm}
\begin{tabular}{|c!{\vrule width 1.5pt}l|}
\hline
Notation&Meaning\rule{0pt}{3mm}\\
\hhline{|=|=|}
$\Pi_{\text{pb}}$, $\lambda_{\text{pb}}$& The PPP modeling PBs in a normal WP-BackCom network, its density\\
\hline
$\tilde{\Phi}_{\text{nd}}$, $\lambda_{\text{pb}}\bar{c}$& The PCP modeling transmitting nodes in the normal WP-BackCom network, its density\\
\hline
$\Phi_{\text{nd}}$, $\lambda_{\text{pb}}\bar{c}D$& The PCP modeling backscatter nodes in the normal WP-BackCom network, its density\\
\hline
$\Pi_{\text{nd}}$, $\lambda_{\text{nd}}D$& The PPP modeling backscatter nodes in the WP-BackCom network with dense PBs, its density\\
\hline
$\Phi_{\text{pb}}$, $\lambda_{\text{nd}}\bar{m}$& The PCP modeling PBs in the  WP-BackCom network with dense PBs, its density\\
\hline
$Y_0, X_0, Z_0$& Typical PB, transmitting node, intended receiver\\
\hline
$\eta$, $g$& Transmission power, beamforming gain for each PB  \\
\hline
$P_X$& Received power at a backscatter node $X$ \\
\hline
$\alpha_1$, $\alpha_2$& Path-loss exponents for the WPT links, the D2D backscatter links \\
\hline
$h$& Rayleigh fading coefficient \\
\hline
$D$, $\beta$& Duty cycle, backscatter reflection coefficient \\
\hline
$P_c$& Circuit power consumption\\
\hline
$\theta$& SIR threshold for success D2D backscatter transmission\\
\hline
$P_s$, $p_0$, $C$& Success probability of backscatter communication, power-outage probability,  network transmission capacity \\
\hline
\end{tabular}
\vspace{-6mm}
\end{center}
\end{table}

\section{Mathematical Models and Metrics}

The mathematical models and performance metrics are discussed in the following sub-sections. The notation used this work is summarized in  Table~\ref{tab:para_system}.

\subsection{Spatial Network Models}

\subsubsection{Normal WP-BackCom Network}
Consider the normal WP-BackCom network where the PB density is smaller than that of backscatter nodes.  This normal PBs model is constructed using a PCP as follows. Let $\Pi_{\text{pb}} = \{Y_0, Y_1, \cdots\}$ denote a PPP in the horizontal plane with density $\lambda_{\text{pb}}$ modeling the locations of PBs. Consider a cluster of mobile transmitting nodes centered at the origin, denoted as $\widetilde{\mathcal{N}}= \{X_0, X_1, \cdots, X_N\}$. The number of nodes, $N$, is a Poisson random variable (r.v.) with mean $\bar{c}$. The r.v.,   $X_n\in \mathds{R}^2$,  represents the location of the corresponding  node  and $\{X_n\}$ are independent and identically distributed (i.i.d.). For an arbitrary r.v. $X_n$, the direction is isotropic and the distance to the origin, $|X_n|$,  has one of two possible probability density functions (PDFs), resulting in the Matern and Thomas cluster processes. Let the function $f: \mathds{R}^{+}\rightarrow \mathds{R}^{+}$ denote the PDF of the distance from a cluster member's location  to the cluster center, denoted as $r$,  which is defined as  follows:
\begin{align}
\text{(Matern c.p.)}   \quad f(r) &=
   \begin{cases}
       \frac{1}{\pi a^{2}}, &0 \leq r \leq a, \\
     0, &\mbox{otherwise,}
   \end{cases}\label{Eq:PDF:Matern}\\
\text{(Thomas c.p.)}\quad     f(r)&=\frac{1}{2\pi \sigma^{2}}\exp\left ( -\frac{r^{2}}{2\sigma^{2}} \right), \label{Eq:PDF:Thomas}
\end{align}
where $a$ and $\sigma^2$ are positive constants representing the cluster radius and the variance, respectively.   Let $\{\widetilde{\mathcal{N}}_Y\}$ denote a sequence of clusters constructed by generating an  i.i.d. sequence of clusters having the same distribution as $\widetilde{\mathcal{N}}$ and translating them to be centered at the points $\{Y\}\in \Pi_{\text{pb}}$.
Then the process of transmitting nodes, denoted as $\tilde{\Phi}_{\text{nd}}$, can be written as $\tilde{\Phi}_{\text{nd}} = \bigcup_{Y\in \Pi_{\text{pb}}} \widetilde{\mathcal{N}}_Y$. The density of $\tilde{\Phi}_{\text{nd}}$ is $\lambda_{\text{pb}} \bar{c}$. The link from a PB to an intended transmitting node is called a \emph{WPT link}. Each transmitting node is paired with an intended receiving node that is located at a unit distance and in an isotropic direction, forming a \emph{D2D backscatter transmission link}. Fig.~\ref{generalPB} shows two network realizations generated based on the Matern and Thomas cluster processes, which helps to visualize the topology of Poisson clustered network as well as the WPT links.


\begin{figure}[t]
\centering
\subfigure[Normal PBs model: Matern cluster process]{\includegraphics[width=8cm]{./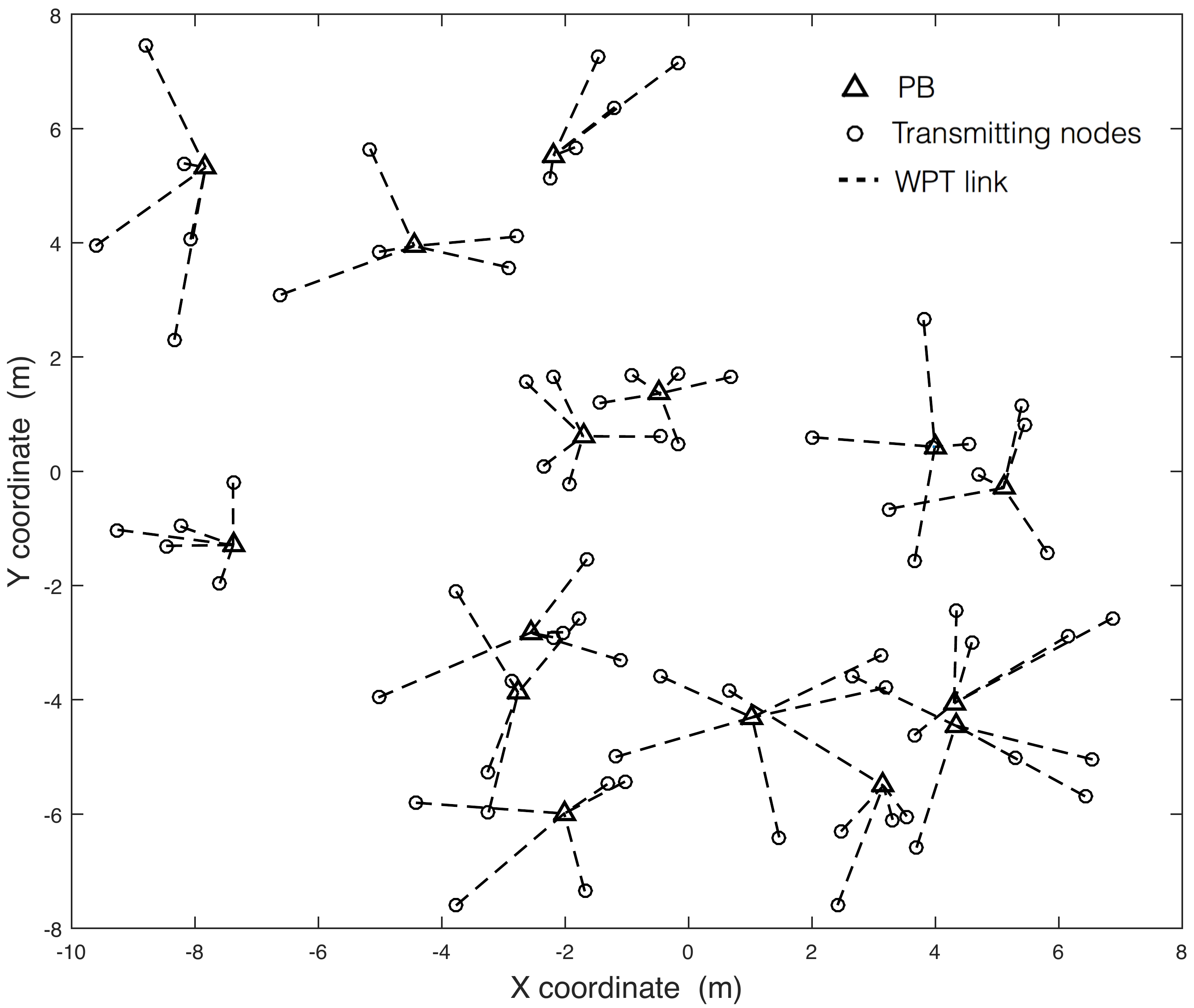}}
\subfigure[Normal PBs model: Thomas cluster process]{\includegraphics[width=8cm]{./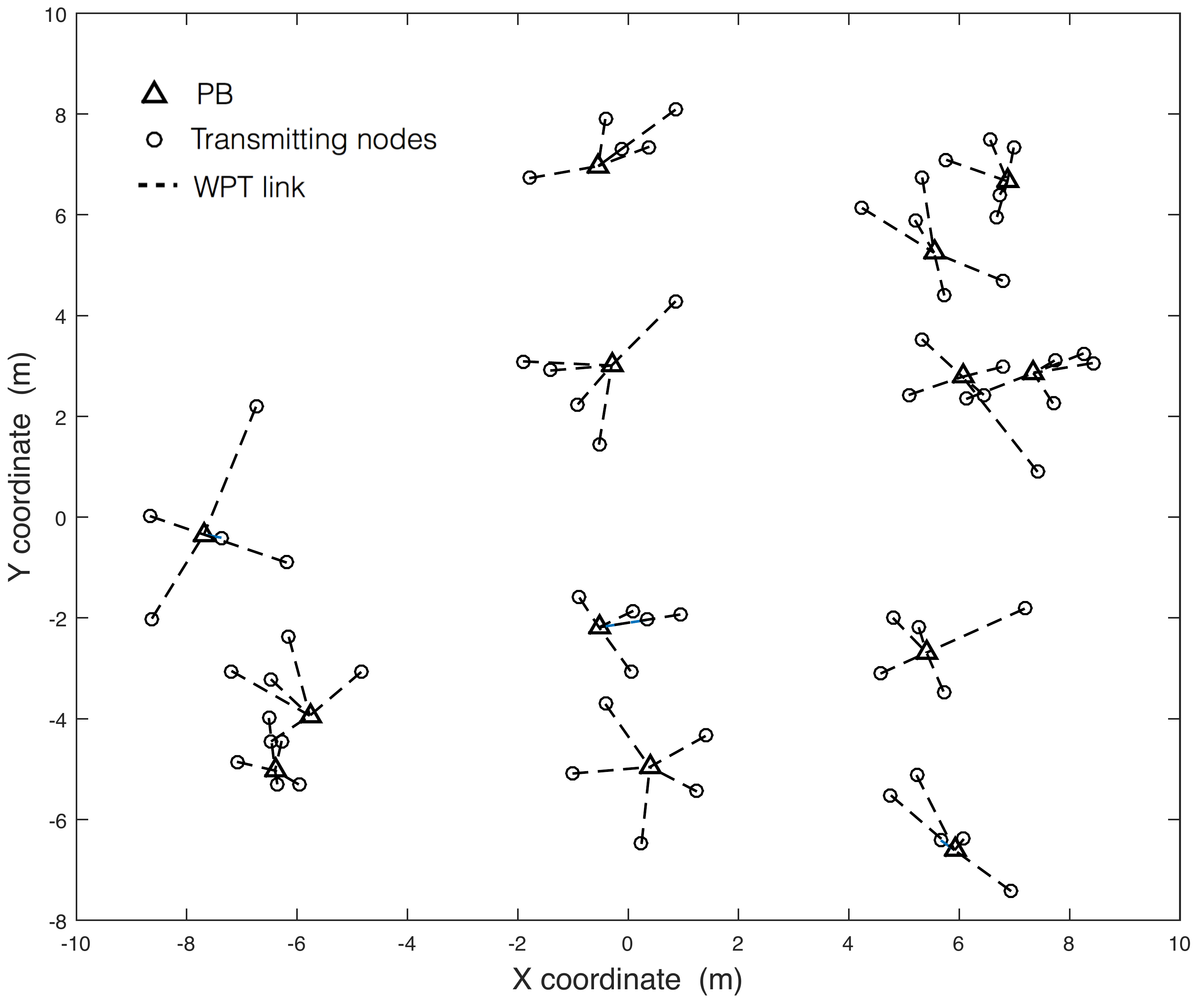}}
\caption{The spatial distribution of the normal WP-BackCom network modeled using the (a) Matern cluster process and (b) Thomas cluster process.}\label{generalPB}
\end{figure}

Time is divided into slots with unit duration.  Each slot is further divided into $M$ mini-slots. A transmitting node randomly selects a single mini-slot to transmit signal by  backscattering and the selection remains unchanged over slots. The selections by  different nodes are assumed independent.  This divides each slot into a backscatter phase and a waiting  phase of durations $1/M$ and $(1 - 1/M)$, respectively. The duty cycle,  denoted as $D$, is given as $D = 1/M$. A transmitting node in a backscatter phase is called a \emph{backscatter node}. Then the backscatter-node process, denoted as $\Phi$, and a cluster of  backscatter nodes centered at $Y$, denoted as $\mathcal{N}_Y$, can be obtained from $\tilde{\Phi}_{\text{nd}}$ and $\widetilde{\mathcal{N}}_Y$ by independent thinning. As a result, $\Phi_{\text{nd}}$ has the density of $\lambda_{\text{pb}} \bar{c}D$ and the expected number of nodes in $\mathcal{N}_Y$ is $\bar{c} D$.

\subsubsection{WP-BackCom Network with Dense PBs}
Consider the case where the PB density exceeds that of backscatter nodes. The corresponding network model with dense PBs is modified from the normal network model by switching the locations of the PBs and transmitting nodes. Specifically, the model comprises a PCP where the parent PPP, denoted as $\Pi_{\text{nd}}$, models the random locations of backscatter nodes and the PBs are the daughter points, denoted as $\Phi_{\text{pb}}$. The density of $\Pi_{\text{nd}}$ is $\lambda_{\text{nd}} D$ with $\lambda_{\text{nd}}$ representing the density of transmitting nodes. By abusing the notation, let $\bar{m}$ represent the expected number of PBs in each cluster of $\Phi_{\text{pb}}$ and $\mathcal{N}_X$ represent a cluster of PBs centered at the backscatter node $X$. It follows that $\Phi_{\text{pb}} = \bigcup_{X\in \Pi_{\text{nd}}} \mathcal{N}_X$. Fig.~\ref{densePBsmodel} shows the dense PBs case of Thomas cluster process.

\begin{figure}[t]
\centering
\includegraphics[width=8.5cm]{./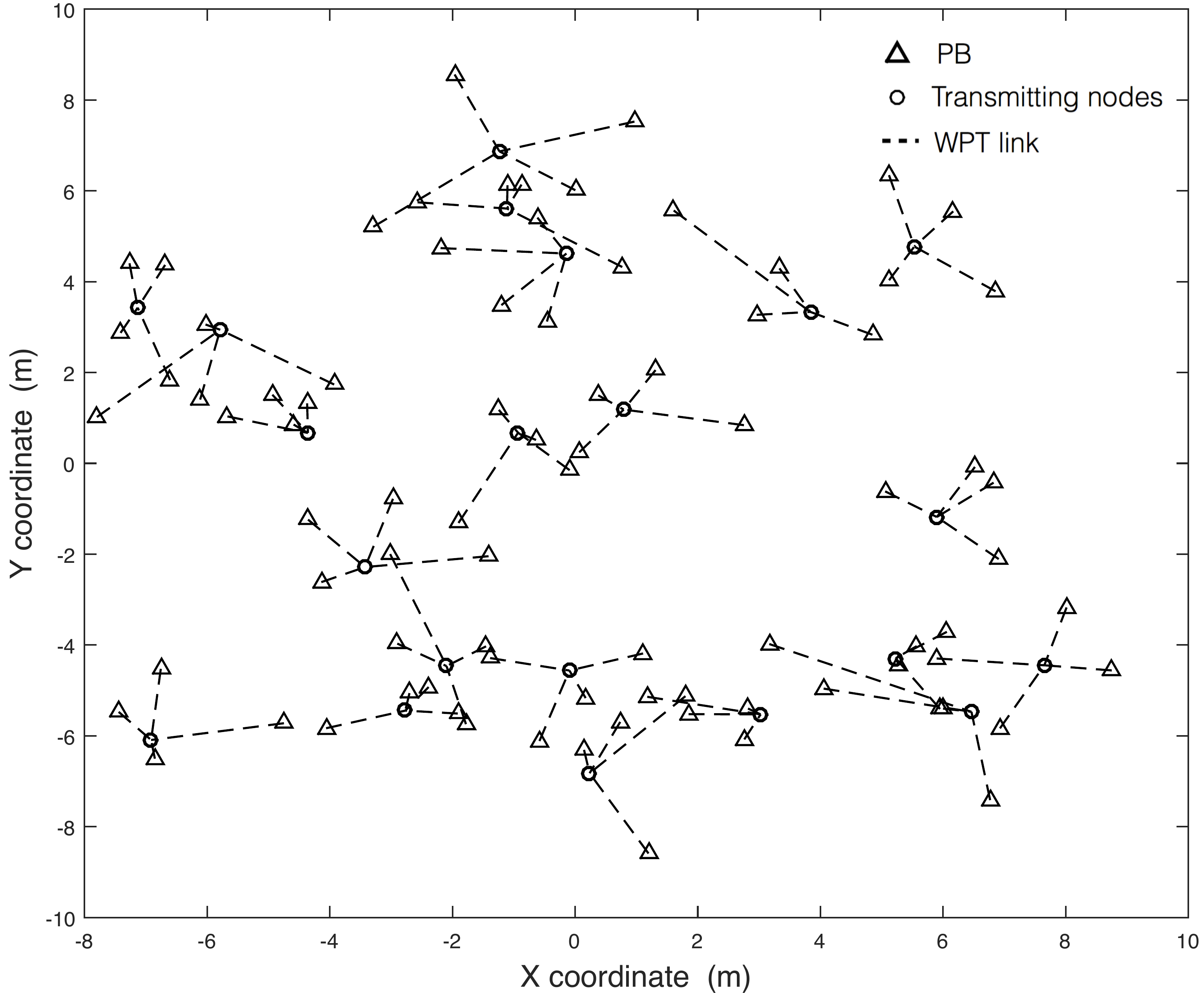}
\caption{The WP-BackCom network with dense PBs: Thomas cluster process. }\label{densePBsmodel}
\end{figure}

\subsection{Channel Model}

\subsubsection{Channel Model for the Normal WP-BackCom Network}
PBs are equipped with antenna arrays and nodes have single isotropic antennas. Each PB beams the continuous wave (CW) to nodes in its corresponding cluster. A PB serves multiple backscatter nodes  simultaneously where the total transmission power is adapted to the number of nodes to guarantee the quality-of-service. Specifically, a PB beams transmission power of $\eta$ to  each intended  node and thus the expected total transmission  power is equal to $\eta$ times the number of nodes in the corresponding cluster (the expectation is   $\bar{c}\eta$). Given beamforming and relatively short distance for efficient WPT, each WPT link is suitably modeled as a channel with path loss but no fading \cite{Huang:CuttingLastWiress:2014, HuangLauArXiv:EnablingWPTinCellularNetworks:2013}.  As a result, with  a typical PB at $Y_0$, the receive power at a typical  node $X_0$ is given as $P_{X_0} = \eta g |X_0 - Y_0|^{-\alpha_1}$ where $g>0$ denotes the beamforming gain and  $\alpha_1$ represents the path loss exponent for WPT links. Due to beamforming, it is assumed that each node harvests negligible energy from other PBs and data signals compared with that from the serving PB.  When transmitting, a  node  backscatters a fraction, called a \emph{reflection coefficient} and denoted as $\beta \in [0, 1]$, of $P_X$ such that the signal power received  at the typical receiver at $Z_0$ is $\beta P_{X_0} h_{X_0} $  \footnote{ Note that the path loss is absent due to the assumption of unit distance for each D2D link. Relaxing this assumption has no effect on the main results except for modifying the SIR threshold $\theta$ by multiplication with a constant $l^{\alpha_2}$ where $l > 0$ denotes the propagation distance for the D2D links.} where $h_{X_0}\sim\exp(1)$ models Rayleigh fading.  A transmitting node may not be able to transmit if there is insufficient energy for operating its circuit as discussed in the sequel. This constraint is represented by a function $\ell(\cdot)$, which is defined in the sequel (the following sub-section), such that the transmission power of the backscatter node $X$ can be written as $\beta \ell(P_X)$. The interference power measured at $Z_0$ can be written as
\begin{equation}
I = \sum_{X \in \Phi_{\text{nd}}\backslash \{X_0\}} \beta \ell(P_X) h_X |X - Z_0|^{-\alpha_2},\label{Eq:IntPwr:a}
\end{equation}
where $\{h_X\}$ are i.i.d. $\exp(1)$ r.v.s modeling Rayleigh fading, $\alpha_2$ represents the path loss exponent for interference (D2D) links. Recall that   $\tilde{\Phi}_\text{nd}$ and $\Phi_\text{nd}$ with $\Phi_\text{nd} \subseteq \tilde{\Phi}_\text{nd}$ represent (both active and inactive) transmitting nodes  and (active) backscatter  nodes provisioned with sufficient energy, respectively. Thus, the interference power  in \eqref{Eq:IntPwr:a} is a sum over $\Phi_\text{nd}$ instead of $\tilde{\Phi}_\text{nd}$. It is worth mentioning that the signal transmitted from PB is unmodulated carrier that appears as a DC level after down-conversion, which can be easily removed and thus not considered as interference as those modulated ones.

\begin{figure}[t]
\centering
\includegraphics[width=12cm]{./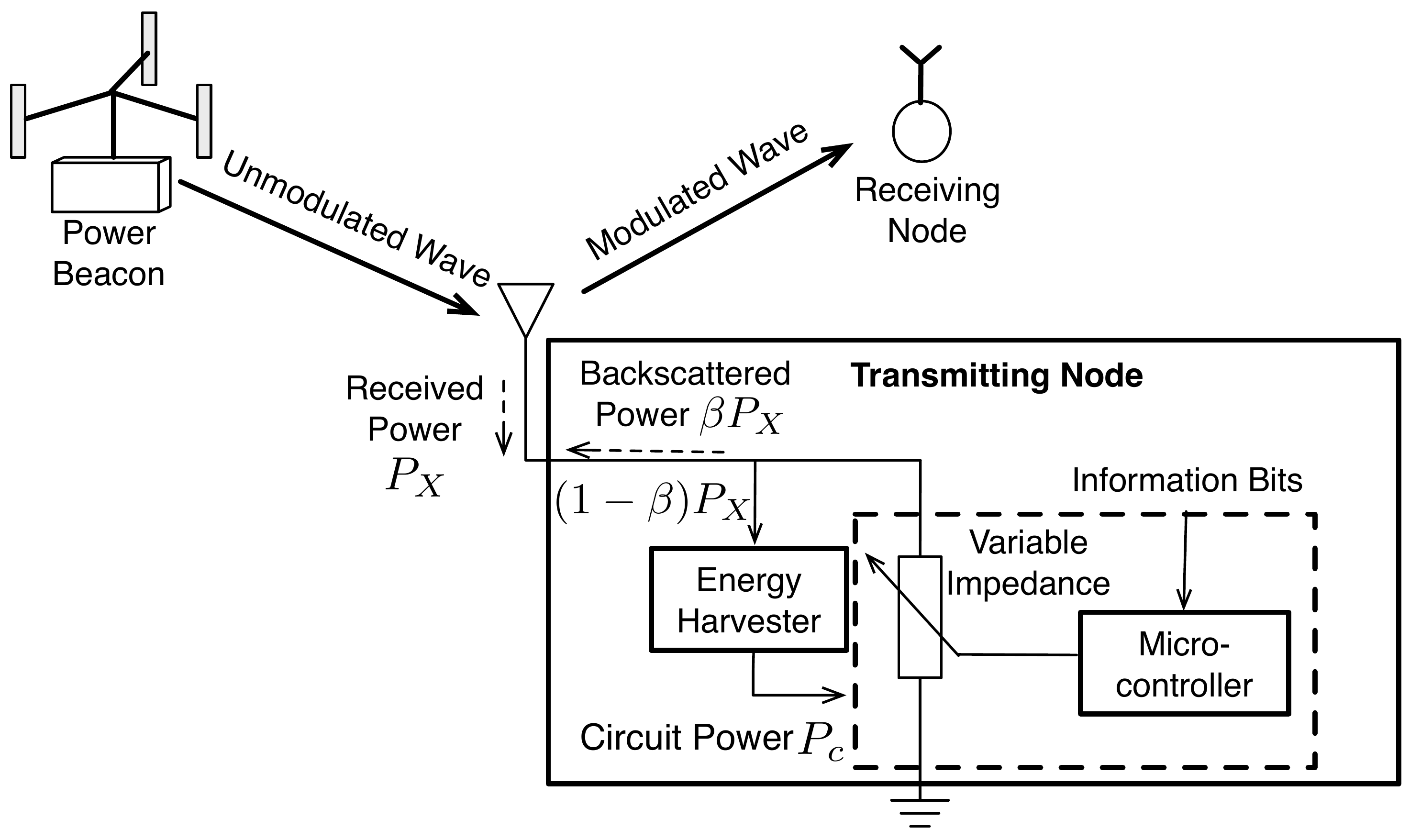}
\caption{Wirelessly powered backscatter communication.}\label{sysmodel}
\end{figure}

\subsubsection{Channel Model for the Network with  Dense PBs} The total power  received at the typical node $X_0$ is contributed by beamforming PBs in the corresponding cluster. As a result,  $P_{X_0}$ can be written as $P_{X_0} = \eta g \sum_{Y\in\mathcal{N}_{X_0}} |X_0 - Y|^{-\alpha_1}$. Moreover, the power of interference as measured at the typical receiver $Z_0$ is given as
\begin{equation}
I = \sum_{X \in \Pi_{\text{nd}}\backslash \{X_0\}} \beta \ell(P_X) h_X |X - Z_0|^{-\alpha_2}. \label{Eq:IntPwr:Dense}
\end{equation}

\subsection{Backscatter Communication Model}

The operation of WP-BackCom network is illustrated in Fig.~\ref{sysmodel}. Consider the backscatter phase of an arbitrary slot. A transmitting node adapts the variable impedance (or equivalently its level of mismatch with the antenna impedance)  shown in the figure so as to modulate the backscattered CW with information bits \cite{1boyer2014backscatterComm.}. Given a transmitting node at $X$ and the reflection coefficient $\beta$, the  backscattered power is $\beta P_X$ with the remainder $(1-\beta) P_X$ consumed by the circuit or harvested \cite{karthaus2003:fully}. Next, for the waiting  phase,  the transmitting node withholds transmission and  performs only energy harvesting. It is assumed that the circuit of each transmitting node consumes fixed power denoted as $P_c$.  To be able to transmit, a node has to harvest sufficient energy for powering the circuit, resulting in the following \emph{circuit-power constraint}: $(1-\beta) P_X D + P_X (1 - D) \geq P_c$. This gives
\begin{equation}\label{Eq:Circuit}
\text{(Circuit-power constraint)}\quad P_X \geq \frac{P_c}{1- \beta D}.
\end{equation}
Consequently, a node transmits or is silent depending on if the constraint is satisfied. Under the circuit-power constraint given in \eqref{Eq:Circuit}, a transmission power of the typical transmitting node can be written as  $\beta \ell(P_{X_0}) $ where the function $\ell(P)$ gives $P$ if $P \geq P_c/(1- \beta D)$ or else is equal to $0$. This function is also used in the interference power in \eqref{Eq:IntPwr:a} and \eqref{Eq:IntPwr:Dense}.

\subsection{Performance Metrics}

The network performance is measured by two metrics, namely the \emph{success probability} and the \emph{network transmission capacity}.  The success probability is  denoted as $P_s$. Assuming an interference limited network, the condition for successful transmission is that the receive signal-to-interference ratio (SIR) exceeds a fixed positive threshold $\theta$. The assumption of negligible noise (or equivalently interference limited network)  can be  justified as follows. The installation of PBs  for powering short-distance D2D links can result in relatively large signal power and increase the density of coexisting active D2D links that have strong mutual interference. As a rule of thumb, the network is interference limited if the expected interference power is much larger than the noise variance, resulting in conditions on network parameters. For instance, it is straightforward to derive a threshold on the PB density for the interference-limited regime.

Then the success probability is given as
\begin{equation}\label{Eq:Ps}
P_s = \Pr\l(\beta \ell(P_{X_0})h_{X_0}   \geq \theta I\r).
\end{equation}
To facilitate the analysis, the total interference will be partitioned into two parts, intra-cluster and inter-cluster interference, for characterizing its distribution in the next section.

 The other metric is network (transmission) capacity denoted as $C$ and defined as:
\begin{equation}\label{Eq:TXCap}
C = \l\{
\begin{aligned}
&B  \lambda_{\text{pb}} \bar{c} D P_s, && \text{normal network},\\
& B  \lambda_{\text{nd}} D P_s, && \text{network with dense PBs},
\end{aligned}\r.
\end{equation}
where the factor  $B$ represents the data rate per D2D link and the other factor is the spatial density of  active links with successful transmission \cite{GantiHaenggi:OutageClusteredMANET:2009, weber2010overview}. It follows that $C$ quantifies the network spatial throughput and has the unit of b/s/Hz/unit-area. Without loss of generality, $B$ is set as $1$ (b/s/Hz) to simplify notation.

\section{Interference and Transmission Power}
In this section, for the WP-BackCom network, the distributions of interference at the typical receiver and the transmission power for the typical backscatter node are analyzed. The results are used subsequently for characterizing network coverage and capacity.

\subsection{Interference Characteristic Functionals}

\subsubsection{Normal WP-BackCom Network}
Let $\mathcal{C}(s)$ with $s > 0$ denote the characteristic functional of the interference power $I$ given in \eqref{Eq:IntPwr:a}:  $\mathcal{C}(s) = \E\l[e^{-sI}\r]$.  In this section, the characteristic functional of normal WP-BackCom network model is derived. Without loss of generality, consider a typical backscatter  node at $X_0$ at the origin and the typical receiving node $Z_0 = z$. To facilitate derivation, the total interference $I$ is split in the sequel  into two components, the \emph{intra-cluster} and \emph{inter-cluster} interference, denoted as $I_a$ and $I_b$, respectively. The intra-cluster interference is caused by the interfering backscatter nodes inside the representative cluster (i.e., the cluster with typical backscatter node and receiver), and the inter-cluster interference is caused by simultaneously backscatter nodes outside the representative cluster. The separation of intra-cluster and inter-cluster  is due to their difference in distribution. Specifically, the former depends on a cluster of random points with the typical one removed and the latter on a cluster point process with the typical cluster removed. They are defined and analyzed separately to yield insight into the structure of interference distribution.

Mathematically, $I = I_a + I_b$  where
\begin{align}
I_a &= \sum_{X\in \mathcal{N}_{0} \backslash \{ X_0\}} \beta \ell(P_X)  h_X |X - z |^{-\alpha_2}, \label{Eq:IntraInt}\\
I_b &= \sum_{Y\in \Pi_{\text{pb}}\backslash \{Y_0\}  }\sum_{X\in \mathcal{N}_{Y} } \beta \ell(P_X)  h_X |X-z|^{-\alpha_2}. \label{Eq:InterInt}
\end{align}
Note that in $\eqref{Eq:InterInt}$, the first summation is over all other PBs not affiliated with the typical backscatter node (corresponding to clusters of interferers) and the second summation is over the cluster of interferers  centered at the PB $Y$.

The characteristic functionals of $I_a$ and $I_b$ are denoted as $\mathcal{C}_a(s)$ and $\mathcal{C}_b(s)$, respectively, which are defined similarly as $\mathcal{C}(s)$.  They are derived as shown in the following two lemmas.

\begin{lemma}[Intra-cluster Interference in the Normal Network]\label{Lem:IntraInt}\emph{Given $s \geq 0$, the characteristic functional of the intra-cluster interference power $I_a$ is given as
\begin{equation}
\mathcal{C}_a(s) = \int_{\mathds{R}^2}\exp\Big(-\bar{c} D q(s, y, z) \Big)f(|y|) d y ,
\end{equation}
where
\begin{align}
q(s, y, z) = \int_{\mathds{O}}\frac{1}{1 + (s \beta \eta g)^{-1}|x|^{\alpha_{1}}|x + y -z|^{\alpha_{2}}}f(|x|)d x,
\end{align}
$f(\cdot)$ is specified in (\ref{Eq:PDF:Matern}) and (\ref{Eq:PDF:Thomas}), and the set $\mathds{O}$ arising from the \emph{circuit-power constraint} is defined as
\begin{equation}
\mathds{O} = \l\{x \in \mathds{R}^2\mid |x| \leq \l(\frac{\eta g(1-\beta D)}{P_c}\r)^{\frac{1}{\alpha_1}}\r\}. \label{eq:effectiveI}
\end{equation}
}
\end{lemma}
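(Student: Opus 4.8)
The plan is to evaluate $\mathcal{C}_a(s)=\E[e^{-sI_a}]$ by first describing, via a Slivnyak/Palm argument for the Poisson cluster process, the conditional law of the representative cluster given the typical backscatter node, and then applying the Poisson probability generating functional (PGFL) together with the Laplace transform of the exponential fading. For the first part: the typical backscatter node $X_0$ is a translate to the origin of a generic cluster member, whose displacement from its cluster center is isotropic with distance PDF $f$; hence the center $Y_0$ of $X_0$'s cluster is isotropic and $|Y_0|$ has PDF $f$, i.e. $Y_0=y$ has planar density $f(|y|)$ --- the source of the outer integral. Conditioned on $Y_0=y$ and on $X_0$ being an active backscatter node, the remaining backscatter members $\mathcal{N}_0\setminus\{X_0\}$ are, by Slivnyak's theorem applied to the inhomogeneous Poisson process (with intensity $\bar{c}f(|x-y|)\,dx$) equivalent to ``$\mathrm{Poisson}(\bar{c})$-many i.i.d.\ offsets around $y$'', followed by independent duty-cycle thinning, an independent Poisson process with intensity $\bar{c}D\,f(|x-y|)\,dx$; equivalently, the offsets $x=X-Y_0$ of the intra-cluster interferers form a Poisson process with intensity $\bar{c}D\,f(|x|)\,dx$.

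\emph{Interference in offset coordinates.} For an interferer at $X=y+x$ one has $P_X=\eta g|x|^{-\alpha_1}$, so the circuit-power constraint $P_X\ge P_c/(1-\beta D)$ is precisely $x\in\mathds{O}$ with $\mathds{O}$ as in \eqref{eq:effectiveI}; thus $\ell(P_X)=\eta g|x|^{-\alpha_1}\mathbf{1}\{x\in\mathds{O}\}$ and $|X-z|^{-\alpha_2}=|x+y-z|^{-\alpha_2}$. Hence, for a fixed receiver location $z$,
\begin{equation}
I_a=\sum_{x}\beta\eta g\,|x|^{-\alpha_1}\,\mathbf{1}\{x\in\mathds{O}\}\,h_x\,|x+y-z|^{-\alpha_2},
\end{equation}
the sum running over the Poisson offsets $x$, with $\{h_x\}$ i.i.d. $\exp(1)$ and independent of the point process.

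\emph{PGFL and fading.} Conditioning on $y$ and using the PGFL of a Poisson process together with the exponential Laplace transform $\E[e^{-th}]=(1+t)^{-1}$,
\begin{equation}
\E\big[e^{-sI_a}\,\big|\,y\big]=\exp\!\left(-\bar{c}D\!\int_{\mathds{R}^2}\!\left(1-\frac{1}{1+s\beta\eta g\,|x|^{-\alpha_1}|x+y-z|^{-\alpha_2}\mathbf{1}\{x\in\mathds{O}\}}\right)f(|x|)\,dx\right).
\end{equation}
The integrand vanishes for $x\notin\mathds{O}$; on $\mathds{O}$ the elementary identity $1-(1+u)^{-1}=(1+u^{-1})^{-1}$ with $u=s\beta\eta g|x|^{-\alpha_1}|x+y-z|^{-\alpha_2}$ rewrites the integrand as $\big(1+(s\beta\eta g)^{-1}|x|^{\alpha_1}|x+y-z|^{\alpha_2}\big)^{-1}f(|x|)$, i.e. as the integrand defining $q(s,y,z)$. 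This gives $\E[e^{-sI_a}\mid y]=\exp(-\bar{c}D\,q(s,y,z))$, and deconditioning over $y$ against $f(|y|)\,dy$ yields the stated expression.

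\emph{Main obstacle.} The delicate step is the Palm-calculus reduction: rigorously justifying that, conditioned on the typical backscatter node, (a) the offset of its cluster center has density $f$, (b) the rest of that cluster is again distributed as a fresh ``$\mathrm{Poisson}(\bar{c})$ i.i.d.\ offsets'' cluster (Slivnyak for the Poisson cluster, using that $\mathrm{Poisson}(\bar{c})$-many i.i.d.\ points conditioned on one being present is $\mathrm{Poisson}(\bar{c})$-many points plus that point), and (c) the independent duty-cycle thinning commutes with this conditioning so as to contribute exactly the factor $D$. Everything after this reduction is the routine Poisson PGFL computation and the $\exp(1)$ Laplace transform.
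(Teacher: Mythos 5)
Your proposal is correct and follows essentially the same route as the paper's proof: condition on the cluster center $Y_0$ (whose offset has planar density $f(|y|)$), invoke Slivnyak's theorem so the remaining thinned cluster is Poisson with intensity $\bar{c}Df(|x|)\,dx$ in offset coordinates, apply the Poisson PGFL (the paper calls this Campbell's theorem) together with the $\exp(1)$ Laplace transform, absorb the circuit-power constraint into the region $\mathds{O}$, and decondition over $y$. The only difference is that you spell out the Palm-calculus justification that the paper compresses into a citation of Slivnyak's theorem.
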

\begin{proof}
See Appendix \ref{appendix:lemma:1}.
\end{proof}

\begin{lemma}[Inter-cluster Interference in the Normal Network]\label{Lem:InterInt}\emph{Given $s \geq 0$, the characteristic functional of the inter-cluster interference power $I_b$ is given as
\begin{equation}
\mathcal{C}_b(s) =\exp\Bigg(-\lambda_{\text{pb}} \int_{\mathds{R}^2} \Big(1 - e^{-\bar{c} D q(s, y, z)}  \Big)dy\Bigg),
\end{equation}
where $q(s, y, z)$  and the set $\mathds{O}$ are defined in Lemma~\ref{Lem:IntraInt}.
}
\end{lemma}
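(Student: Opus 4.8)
The plan is to evaluate $\mathcal{C}_b(s)=\E[e^{-sI_b}]$ by exploiting the two–level Poisson structure of the inter-cluster interferers in \eqref{Eq:InterInt}: an outer PPP $\Pi_{\text{pb}}$ of cluster centres, with an independent Poisson cluster of backscatter nodes attached to each centre. First I would condition on the typical node/PB pair and invoke Slivnyak's theorem on the parent PPP, so that $\Pi_{\text{pb}}\backslash\{Y_0\}$ together with its attached clusters is again distributed as the original PCP; the contributing interferers then form $\bigcup_{Y\in\Pi_{\text{pb}}}\mathcal{N}_Y$ over an ordinary PPP of intensity $\lambda_{\text{pb}}$. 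Using independence across clusters, define for a centre at $y$ the per-cluster transform
\[
v(y)\;=\;\E\!\left[\exp\!\left(-s\!\!\sum_{X\in\mathcal{N}_y}\beta\,\ell(P_X)\,h_X\,|X-z|^{-\alpha_2}\right)\right],
\]
where the expectation is over the Poisson number of nodes in $\mathcal{N}_y$ (mean $\bar c D$ after the duty-cycle thinning), their i.i.d.\ displacements with density $f$, and the fading variables. The probability generating functional (PGFL) of a PPP then yields $\mathcal{C}_b(s)=\exp\!\bigl(-\lambda_{\text{pb}}\int_{\mathds{R}^2}(1-v(y))\,dy\bigr)$, which is the target form once $v(y)$ is identified.

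The second step is to compute $v(y)$, and this is essentially the inner computation already carried out in the proof of Lemma~\ref{Lem:IntraInt}. Writing a node of $\mathcal{N}_y$ as $y+x$ with displacement $x$ of density $f(|x|)$, so that $P_{y+x}=\eta g|x|^{-\alpha_1}$ and the distance to the receiver is $|x+y-z|$, the cluster is a Poisson process with intensity $\bar c D\,f(|x|)$ in displacement coordinates; its generating functional gives
\[
v(y)=\exp\!\left(-\bar c D\!\int_{\mathds{R}^2}\!\Bigl(1-\E_{h}\bigl[e^{-s\beta\,\ell(\eta g|x|^{-\alpha_1})\,h\,|x+y-z|^{-\alpha_2}}\bigr]\Bigr)f(|x|)\,dx\right).
\]
Because $h\sim\exp(1)$ gives $\E_h[e^{-th}]=1/(1+t)$, and because $\ell(\eta g|x|^{-\alpha_1})$ equals $\eta g|x|^{-\alpha_1}$ precisely on $\mathds{O}$ and vanishes off it (so the integrand is $0$ for $x\notin\mathds{O}$), the displacement integral collapses onto $\mathds{O}$; the identity $1-\tfrac{1}{1+A}=\tfrac{1}{1+A^{-1}}$ with $A=s\beta\eta g|x|^{-\alpha_1}|x+y-z|^{-\alpha_2}$ then converts the integrand into exactly that of $q(s,y,z)$. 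Hence $v(y)=e^{-\bar c D\,q(s,y,z)}$, and substituting into the PGFL expression gives the claimed $\mathcal{C}_b(s)$.

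The routine parts — the Laplace transform of the exponential fading and the algebraic simplification — carry no real difficulty; the one point that needs care is the Palm-calculus bookkeeping in the first step, i.e.\ justifying that conditioning on the typical backscatter node (hence on its serving PB $Y_0$) leaves the law of the remaining PBs and their clusters unchanged, so that $I_b$ is genuinely a functional of a clean PPP of i.i.d.\ clusters and the PGFL applies. I would also record the mild integrability check that makes the manipulations legitimate: $v(y)\in[0,1]$ and $\int_{\mathds{R}^2} q(s,y,z)\,dy<\infty$, the latter following from the bounded support of $f$ for the Matern process and its Gaussian tail for the Thomas process, which guarantees $\int_{\mathds{R}^2}(1-v(y))\,dy<\infty$ and hence a well-defined, strictly positive $\mathcal{C}_b(s)$.
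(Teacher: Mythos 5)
Your proposal is correct and follows essentially the same route as the paper's proof: Slivnyak's theorem to restore the parent PPP, independence across clusters to reduce to a per-cluster transform $v(y)$ computed exactly as in the proof of Lemma~\ref{Lem:IntraInt} (Poisson-cluster PGFL plus the Laplace transform of the $\exp(1)$ fading, with the circuit-constraint function $\ell(\cdot)$ collapsing the integral onto $\mathds{O}$), and finally the PGFL of $\Pi_{\text{pb}}$ to obtain the stated expression. The integrability remarks you add are a harmless refinement not present in the paper.
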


\begin{proof}
See Appendix \ref{appendix:lemma:2}.
\end{proof}

\subsubsection{Network with Dense PBs} In this model, the interferes are distributed as a homogeneous PPP instead of a PCP as in the normal network. Thus, it is unnecessary to decompose the total interference into intra- and inter-cluster interference. The total interference power can be written as
\begin{align}
I^{'} &= \sum_{X \in \Pi_{\text{nd}}\backslash \{X_{0}\}} \beta \ell (P_{X}) h_{X} |X-z|^{-\alpha_{2}} \nonumber \\
&= \sum_{X \in \Pi_{\text{nd}}\backslash \{X_{0}\}} \beta\ell \l( \sum_{Y \in \mathcal{N}_X} \eta g |Y-X|^{-\alpha_{1}} \r)  h_{X} |X-z|^{-\alpha_{2}}.
\end{align}
It can be observed that the received power at each node is a sum over a cluster of PBs instead of a single PB in the normal network. For analytical tractability, relaxing the circuit-power constraint allows nodes that are previously inactive under this constraint to transmit, resulting in the following upper bound on the interference power:
\begin{equation}
I^{'}\leq  \beta \eta g \sum_{X \in \Pi_{\text{nd}}\backslash \{X_{0}\}} \l( \sum_{Y \in \mathcal{N}_X}  |Y-X|^{-\alpha_{1}} \r)  h_{X} |X-z|^{-\alpha_{2}}. \label{De:InterDense}
\end{equation}
Notice that since each node is charged by multiple PBs, it is likely that the circuit-power constraints are satisfied at the majority of nodes and thus the lower bound in \eqref{De:InterDense} is tight. Using this inequality, a lower bound on the  characteristic functional of $I^{'}$, denoted as $\mathcal{C}^{'}(s)$, is obtained as shown in the  following lemma.

\begin{lemma}[Interference in the Network with Dense PBs] \label{Lem:densePBmodel} \emph{ Given $s\geq0$, the characteristic functional of the interference power $I^{'}$ satisfies
\begin{equation}
\mathcal{C}^{'}(s)  \geq  \exp\Bigg(- \lambda_{\text{nd}} D \int_{\mathds{R}^2} \Big( 1- e^{-\bar{m} u(s, x, z)} \Big) dx\Bigg),
\end{equation}
where
\begin{align}
u(s,x,z) = 2\pi \int^{\infty}_{0} \frac{1}{1+(s\beta \eta g)^{-1} r^{\alpha_{1}}|x-z|^{\alpha_{2}}} f(r) r dr
\end{align}
and $f(\cdot)$ is specified in (\ref{Eq:PDF:Matern}) and (\ref{Eq:PDF:Thomas}).
}
\end{lemma}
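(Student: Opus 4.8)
The plan is to prove the inequality by combining the deterministic relaxation already recorded in \eqref{De:InterDense} with two successive applications of the probability generating functional (PGFL): one for the parent PPP $\Pi_{\text{nd}}$ of backscatter nodes, and one for the Poisson cluster $\mathcal{N}_X$ of PBs attached to each interfering node. The fact that the statement is an inequality rather than an identity comes from two one‑sided bounds, and the bulk of the work is just checking that both push $\mathcal{C}'(s)$ in the lower‑bound direction.

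First I would reduce to the relaxed interference. Since $s\geq 0$, monotonicity of $t\mapsto e^{-st}$ together with \eqref{De:InterDense} gives $\mathcal{C}'(s)=\E[e^{-sI'}]\geq \E[e^{-sI'_{\mathrm{ub}}}]$, where $I'_{\mathrm{ub}}=\beta\eta g\sum_{X\in\Pi_{\text{nd}}\backslash\{X_0\}}\big(\sum_{Y\in\mathcal{N}_X}|Y-X|^{-\alpha_1}\big)h_X|X-z|^{-\alpha_2}$ denotes the right‑hand side of \eqref{De:InterDense}. Conditioning on the typical backscatter node $X_0$ at the origin, Slivnyak's theorem applied to $\Pi_{\text{nd}}$ shows that the interfering nodes $\Pi_{\text{nd}}\backslash\{X_0\}$ still form a homogeneous PPP of intensity $\lambda_{\text{nd}}D$, each carrying an independent mark consisting of its own PB cluster $\mathcal{N}_X$ (a Poisson number, mean $\bar{m}$, of i.i.d. points with radial density $f$, centred at $X$) and its fading $h_X\sim\exp(1)$; the PBs of $\mathcal{N}_{X_0}$ never appear in $I'$ since only backscatter nodes interfere.

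Next I would apply the PGFL of this marked PPP, obtaining
\[
\E[e^{-sI'_{\mathrm{ub}}}]=\exp\!\Big(-\lambda_{\text{nd}}D\int_{\mathds{R}^2}\big(1-\bar v(x)\big)\,dx\Big),
\]
where $\bar v(x)=\E\big[e^{-s\beta\eta g(\sum_{Y\in\mathcal{N}}|Y|^{-\alpha_1})h|x-z|^{-\alpha_2}}\big]$ with $\mathcal{N}$ a Poisson cluster centred at the origin and $h\sim\exp(1)$ independent. Taking the $h$‑expectation first turns $\bar v(x)$ into $\E_{\mathcal{N}}\big[(1+s\beta\eta g|x-z|^{-\alpha_2}\sum_{Y\in\mathcal{N}}|Y|^{-\alpha_1})^{-1}\big]$. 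The key step is then the elementary inequality $(1+\sum_i a_i)^{-1}\geq\prod_i(1+a_i)^{-1}$ for $a_i\geq 0$ (which follows from expanding $\prod_i(1+a_i)\geq 1+\sum_i a_i$ and taking reciprocals), decoupling the reciprocal of the cluster sum into a product over cluster points; this is precisely where the bound becomes a genuine inequality. Applying the PGFL of the Poisson cluster $\mathcal{N}$ (intensity measure $\bar{m}f(|y|)\,dy$) to the resulting product, using $1-(1+s\beta\eta g|x-z|^{-\alpha_2}|y|^{-\alpha_1})^{-1}=(1+(s\beta\eta g)^{-1}|y|^{\alpha_1}|x-z|^{\alpha_2})^{-1}$, and passing to polar coordinates so that $\int_{\mathds{R}^2}(\cdots)f(|y|)\,dy=u(s,x,z)$, I get $\bar v(x)\geq e^{-\bar{m}u(s,x,z)}$. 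Substituting into the displayed identity and using that $\exp(-\lambda_{\text{nd}}D\int(1-\,\cdot\,)\,dx)$ is increasing in its argument then yields the claim.

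I do not expect a genuine obstacle: the argument is a layered but routine PGFL computation. The only delicate points are bookkeeping ones — verifying that the two relaxations (dropping the circuit‑power constraint in \eqref{De:InterDense} and the product inequality) both move $\mathcal{C}'(s)$ in the same lower‑bound direction, making the order of the two PGFL applications explicit, and confirming via Slivnyak that conditioning on the typical backscatter node leaves the law of the interfering nodes and of their attached PB clusters unchanged.
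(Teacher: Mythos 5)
Your proof is correct and shares the overall architecture of the paper's argument in Appendix~\ref{appendix:lemma:5}: relax the interference via \eqref{De:InterDense}, invoke Slivnyak's theorem at the typical node, and then apply the probability generating functional twice, first over the parent PPP $\Pi_{\text{nd}}$ and then over the PB cluster $\mathcal{N}_X$. The one place where you genuinely diverge is the handling of the inner expectation $\E_{\mathcal{N}_X}\E_h[\cdot]$, and there your version is the more careful one. The paper's step $(b)$ pushes $\E_h$ inside the cluster integral and records the result as an equality; that identity would hold only if each PB in $\mathcal{N}_X$ carried its own independent fading variable, whereas in the model a single $h_X$ multiplies the entire cluster sum $\sum_{Y\in\mathcal{N}_X}|Y-X|^{-\alpha_1}$. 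Concretely, the left side of $(b)$ is $\E_{\mathcal{N}}\big[(1+\sum_Y c_Y)^{-1}\big]$ while the right side is $\E_{\mathcal{N}}\big[\prod_Y(1+c_Y)^{-1}\big]$, and these coincide only as a one-sided bound, $(1+\sum_i c_i)^{-1}\geq\prod_i(1+c_i)^{-1}$ --- equivalently, one may condition on $h$, apply the cluster PGFL, and then use Jensen's inequality on the convex exponential; fortunately the bound points in the direction needed for the lemma. Your route makes exactly this decoupling explicit: take $\E_h$ first, apply the elementary product inequality, and only then use the cluster PGFL, finishing with the observation that the outer exponential is increasing in the inner functional. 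Both relaxations land on the identical expression $e^{-\bar{m}u(s,x,z)}$, so the two proofs agree in substance; yours has the merit of naming and justifying an inequality that the paper silently treats as an equality.
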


\begin{proof}
See Appendix \ref{appendix:lemma:5}.
\end{proof}

\begin{remark}\emph{Comparing Lemma~\ref{Lem:InterInt} and \ref{Lem:densePBmodel}, the lower bounds on $\mathcal{C}_b(s)$ and $\mathcal{C}^{'}(s)$ have similar expressions. The key difference is that the PB density ($\lambda_{\text{pb}}$) and expected number of backscatter nodes per cluster ($\bar{c}D$) in  the former are replaced by the backscatter-node density ($\lambda_{\text{nd}} D$) and expected number of PBs per cluster ($\bar{m}$) in the latter. The similarity arises from  that both network models are based on  PCPs where the locations of backscatter nodes and PBs are swapped.
}
\end{remark}

\subsection{Transmission-Power Distribution}

\subsubsection{Normal WP-BackCom Network}

Under the circuit-power constraint, there exists a threshold on the separation distance between a pair of PB and affiliated backscatter node:
\begin{equation}
d_0 = \l[\frac{\eta g (1 - \beta D)}{P_c}\r]^{\frac{1}{\alpha_1}}, \label{Eq:Th:Dist}
\end{equation}
such that  the node's transmission power is zero if the distance exceeds the threshold. Then transmission power (i.e., backscattered power) of the typical backscatter node, denoted as $P_t$, is given as $P_t =  \beta \eta g |X_0 - Y_0|^{-\alpha_1}$ if $|X_0 - Y_0| \leq d_0$ or otherwise $P_t = 0$. The event of $P_t = 0$ corresponds that of circuit power outage.
It follows that the \emph{power-outage probability}, denoted as $p_0$, can be written as
\begin{equation}
p_0 = \Pr(P_t = 0) = 2\pi \int_{d_0}^\infty f(r) r dr.   \label{Eq:SigDist:1}
\end{equation}
For the case where the circuit-power constraint is satisfied,
\begin{equation}
\Pr(P_t \geq  \tau) = 2\pi \int\limits_0^{(\beta \eta g/\tau)^{\frac{1}{\alpha_1}}} f(r) rdr, \ \  \tau \geq  \frac{\beta P_c}{1 - \beta D}. \label{Eq:SigDist:2}
\end{equation}
Substituting the PDFs in \eqref{Eq:PDF:Matern} and \eqref{Eq:PDF:Thomas} into \eqref{Eq:SigDist:1} and \eqref{Eq:SigDist:2} gives the following results.

\begin{lemma}[Node Transmission Power for the Normal Network] \label{Lem:SigDist}\emph{The transmission power of a typical backscatter node has support of $\{0\}\cup [\frac{\beta P_c}{1-\beta D}, \infty]$.  The power-outage probability, $p_0$, and the complementary cumulative distribution function (CCDF), denoted as $\bar{F}_t$,  are given as follows.
\begin{itemize}
\item[--] (Matern cluster process)
\begin{align}
p_0  &= \l\{
\begin{aligned}
& 1 -\l(\frac{d_0}{a}\r)^2, && d_0 <  a, \\
& 0, && \text{otherwise.}
\end{aligned}\r.\nn\\
\bar{F}_t(\tau) = \Pr(P_t \geq \tau ) &= \l\{
\begin{aligned}
& \frac{1}{a^2}\l(\frac{\beta \eta g}{\tau}\r)^{\frac{2}{\alpha_1}}, && \tau >   \frac{\beta \eta g }{a ^{\alpha_1}}\\
& 1, && \text{otherwise,}
\end{aligned}\r.\nn
\end{align}
with  $\tau \in [\frac{\beta P_c}{1-\beta D}, \infty]$.
\item[--] (Thomas cluster process)
\begin{align}
p_0 &= \exp\l(-\frac{d_0^2}{2\sigma^2}\r), \nn\\
\bar{F}_t(\tau) &= 1 - \exp\l( - \frac{1}{2\sigma^2}\l(\frac{\beta\eta g}{\tau}\r)^{\frac{2}{\alpha_1}}\r), \nn
\end{align}
with  $\tau \in [\frac{\beta P_c}{1-\beta D}, \infty]$.
\end{itemize}
}
\end{lemma}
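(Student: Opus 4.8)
The plan is to build directly on the explicit description of the transmission power obtained immediately before the statement, namely $P_t = \beta\eta g\,|X_0-Y_0|^{-\alpha_1}$ on the event $\{|X_0-Y_0|\le d_0\}$ and $P_t=0$ otherwise, with the distance threshold $d_0$ given by \eqref{Eq:Th:Dist}. Since the typical transmitting node $X_0$ lies in a cluster whose centre is its associated PB $Y_0$, the distance $|X_0-Y_0|$ has the cluster-member PDF $f(\cdot)$ of \eqref{Eq:PDF:Matern} or \eqref{Eq:PDF:Thomas}; this is exactly what is already encoded in the two integral expressions \eqref{Eq:SigDist:1} for $p_0$ and \eqref{Eq:SigDist:2} for $\bar F_t$. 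Hence everything reduces to (i) reading off the support of $P_t$ and (ii) evaluating those two integrals for each of the two PDFs.

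For the support, I would observe that $r\mapsto \beta\eta g\,r^{-\alpha_1}$ is continuous and strictly decreasing on $(0,d_0]$ and maps this interval onto $[\beta\eta g\, d_0^{-\alpha_1},\infty)$; by the definition of $d_0$ in \eqref{Eq:Th:Dist} the left endpoint is $\beta\eta g\,d_0^{-\alpha_1} = \beta P_c/(1-\beta D)$. Together with the atom at $0$ produced by the circuit-power outage event, this gives the asserted support $\{0\}\cup[\beta P_c/(1-\beta D),\infty)$.

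For $p_0$ I would substitute $f$ into \eqref{Eq:SigDist:1}. In the Matern case the integrand is the constant $1/(\pi a^2)$ truncated to $[0,a]$, so $2\pi\int_{d_0}^\infty f(r)r\,dr$ equals $(a^2-d_0^2)/a^2 = 1-(d_0/a)^2$ when $d_0<a$, and equals $0$ when $d_0\ge a$ because the integration interval is then empty. In the Thomas case the substitution $u=r^2/(2\sigma^2)$ turns the integral into $\int_{d_0^2/(2\sigma^2)}^{\infty} e^{-u}\,du = e^{-d_0^2/(2\sigma^2)}$. For $\bar F_t$ I would do the same with \eqref{Eq:SigDist:2}, whose upper integration limit is $(\beta\eta g/\tau)^{1/\alpha_1}$: in the Matern case this yields $\frac{1}{a^2}(\beta\eta g/\tau)^{2/\alpha_1}$ whenever that upper limit is $\le a$, i.e. $\tau>\beta\eta g/a^{\alpha_1}$, and the integral saturates at $1$ otherwise; in the Thomas case the same change of variables gives $1-\exp\!\big(-\tfrac{1}{2\sigma^2}(\beta\eta g/\tau)^{2/\alpha_1}\big)$.

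The computations are entirely elementary, so there is no genuine obstacle; the one place that needs a little care is tracking the Matern case distinctions — making sure the truncation at the cluster radius $a$ is handled correctly both for $p_0$ (when $a\le d_0$, which forces $p_0=0$) and for $\bar F_t$ (when $a\ge(\beta\eta g/\tau)^{1/\alpha_1}$, which forces $\bar F_t=1$) — and verifying that the induced thresholds on $\tau$ are consistent with the stated domain $\tau\in[\beta P_c/(1-\beta D),\infty)$. This consistency holds because $d_0<a$ is equivalent to $\beta\eta g/a^{\alpha_1}<\beta P_c/(1-\beta D)$, so the branch $\bar F_t(\tau)=1$ occurs only when $d_0\ge a$, exactly the case $p_0=0$.
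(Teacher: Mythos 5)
Your proposal is correct and follows essentially the same route as the paper, which simply substitutes the PDFs \eqref{Eq:PDF:Matern} and \eqref{Eq:PDF:Thomas} into the integrals \eqref{Eq:SigDist:1} and \eqref{Eq:SigDist:2}; your evaluation of those integrals, the support argument via $\beta\eta g\,d_0^{-\alpha_1}=\beta P_c/(1-\beta D)$, and the consistency check between the Matern case split and the domain of $\tau$ are all accurate.
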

 A sanity check is as follows. The distance threshold $d_0$ in \eqref{Eq:Th:Dist} is a monotone decreasing  function of both $\beta D$ and $P_c$. The reason is that increasing the duty cycle and reflection coefficient leads to less harvested energy thus adding the probability of circuit-power outage and improving the circuit power conumption has the same effect. Consequently, the power-outage probability decreases with increasing $d_0$ for both cases in Lemma~\ref{Lem:SigDist}. Next, the CCDFs in Lemma~\ref{Lem:SigDist} are observed to be independent of $D$ but increase with growing $\beta$. The reason is that conditioned on the node transmitting, the transmission power depends only on the incident power from the PB scaled by  $\beta$ but is independent of the duty cycle.

\begin{remark}[Effects of $P_c$ on Power Outage]\emph{Let $\{c_n\}$ denote a set of constants. For the  Matern cluster process, the power-outage probability $p_0$ is a monotone increasing function of  the circuit power $P_c$ and scales as $\l(1 - c_1 P_c^{-\frac{2}{\alpha_1}}\r)$. One can see that increasing the path loss exponent $\alpha_1$ alleviates the negative effect of increasing $P_c$ on power outage. In contrast, the scaling law for the Thomas cluster process is $\exp\l(- c_2 P_c^{-\frac{2}{\alpha_1}}\r)$. Comparing the two scaling laws reveals that the effect of increasing $P_c$ is less severe in the latter model where nodes tend to be be nearer to PBs and thus the WPT loss is smaller.
}
\end{remark}

\begin{remark}[Effects of $\beta$ and $D$ on Power Outage]\emph{For the  Matern cluster process, $p_0$ also grows  with the increasing product of the reflection coefficient $\beta$ and duty cycle $D$. For $\beta D \approx 0$, the scaling law  is $\frac{c_3\beta D}{\alpha_1} + c_4$. This suggests a tradeoff between $\beta$ and $D$ and the positive effect of having increasing the  path loss exponent.  The scaling law for the Thomas cluster process is $c_6\exp\l(\frac{c_5 \beta D}{\alpha_1}\r) \approx \frac{c_7\beta D}{\alpha_1} + c_6$ that is similar to the counterpart for the Matern cluster process.
}
\end{remark}

\subsubsection{Network with Dense PBs}

Given that the circuit-power constraint is satisfied, the transmission power, denoted as  $P_t'$,  at the typical backscatter node $X_0$, is a fraction of the incident power $P_{X_0}'$ that follows the compound Poisson distribution: $P_{X_0}' = \eta g \sum_{n}^N d_n^{-\alpha_1}$ where $N$ denotes  the number of daughter points in the cluster $\mathcal{N}_{X_0}$ and the set of i.i.d. r.v.s $\{d_n\}$ re-denotes $\{|X_0 - Y|\mid Y \in \mathcal{N}_{X_0}\}$ to simplify notation. The distribution of $\{d_n\}$ is specified by $f(r)$ in \eqref{Eq:PDF:Matern} and \eqref{Eq:PDF:Thomas}. The distribution function of $P_{X_0}'$ (or equivalently $P_t'$) has no closed-form. The following analysis  focuses on characterizing the power-outage probability defined as
\begin{equation}
p_0' = \Pr(P_t' = 0) = \Pr\l(P_{X_0}' \leq \frac{P_c}{1-\beta D}\r).
\end{equation}
Applying Chernoff bound gives
\begin{equation}\label{Eq:Chernoff}
p_0'  \leq \min_{\mu > 0} \E\l[e^{-\mu \sum_{n}^N d_n^{-\alpha_1} }\r]e^{\frac{\mu P_c}{(1-\beta D)\eta g}}.
\end{equation}
By deriving the characteristic functional of $\sum_{n}^N d_n^{-\alpha_1} $,  $p_0'$ can be upper bounded as shown below.

 \begin{lemma} [Node Transmission Power for the Network with Dense PBs] \label{Lem:SignalDense} \emph{ The power-outage probability $p_{0}^{'}$  satisfies the following.
  \begin{itemize}
  \item[--] (Matern cluster process)
  \begin{align}
  p_{0}^{'} \leq \exp\Bigg(\frac{\mu^* P_c}{(1 - \beta D)\eta g} -  \frac{\bar{m}}{a^{2}}\int^{a^{2}}_{0}\l(1- e^{-\mu^* t^{-\frac{\alpha_{1}}{2}}} \r) dt \Bigg), \label{Iq:maternChernoff}
  \end{align}
  where   $\mu^* > 0 $ solves
  \begin{align}
 \int^{a^{2}}_{0} t^{-\frac{\alpha_{1}}{2}} e^{-\mu^* t^{-\frac{\alpha_{1}}{2}}} dt = \frac{a^2 P_c (\eta g)^{-1}}{\bar{m}(1 - \beta D)}. \label{Eq:optimalMuMatern}
  \end{align}
  \item[--] (Thomas cluster process)
  \begin{align}
  p_{0}^{'} \leq \exp \Bigg(\frac{\mu^* P_c}{(1 - \beta D)\eta g} - \bar{m} \int^{\infty}_{0} e^{-t} \l(1 - e^{-t -\mu^* (\sqrt{2}\sigma)^{-\alpha_{1}}t^{-\alpha_{1}/2}}\r)  dt\Bigg), \label{Iq:thomasChernoff}
  \end{align}
  where $\mu^* > 0 $ solves
  \begin{align}
\int^{\infty}_{0} \frac{e^{-t -\mu^* (\sqrt{2}\sigma)^{-\alpha_{1}}t^{-\alpha_{1}/2}}}{(\sqrt{2}\sigma)^{\alpha_{1}} t^{\frac{\alpha_{1}}{2}} } dt  = \frac{P_c(\eta g)^{-1}}{\bar{m}(1 - \beta D)}. \label{Eq:optimalMuThomas}
  \end{align}
  \end{itemize}}
  \end{lemma}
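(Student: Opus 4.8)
The plan is to work from the Chernoff bound~\eqref{Eq:Chernoff} and evaluate the moment generating function of the normalized received-power sum $S:=\sum_{Y\in\mathcal{N}_{X_0}}|Y-X_0|^{-\alpha_1}$ appearing there. Since $\mathcal{N}_{X_0}$ is a Poisson cluster --- $N\sim\mathrm{Poisson}(\bar{m})$ offset points which, conditioned on $N$, are i.i.d.\ with spatial density $f(|\cdot|)$ on $\mathds{R}^2$ --- I would condition on $N$ and invoke the probability generating functional of a Poisson process, exactly the device used for Lemmas~\ref{Lem:IntraInt}--\ref{Lem:densePBmodel}:
\[
\E\big[e^{-\mu S}\big]=\E\Big[\Big(\textstyle\int_{\mathds{R}^2}e^{-\mu|x|^{-\alpha_1}}f(|x|)\,dx\Big)^{\!N}\Big]=\exp\!\Big(-\bar{m}\!\int_{\mathds{R}^2}\!\big(1-e^{-\mu|x|^{-\alpha_1}}\big)f(|x|)\,dx\Big).
\]
Plugging this into~\eqref{Eq:Chernoff} yields $p_0'\le\min_{\mu>0}e^{\psi(\mu)}$ with
\[
\psi(\mu):=\frac{\mu P_c}{(1-\beta D)\eta g}-\bar{m}\int_{\mathds{R}^2}\big(1-e^{-\mu|x|^{-\alpha_1}}\big)f(|x|)\,dx.
\]

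The second step is to reduce the planar integral in $\psi$ to the one-dimensional form in the statement. Passing to polar coordinates, for the Matern process I would substitute $t=r^2$ on $[0,a^2]$ with $f(r)=1/(\pi a^2)$, and for the Thomas process substitute $t=r^2/(2\sigma^2)$ on $[0,\infty)$ with $f(r)=\frac{1}{2\pi\sigma^2}e^{-r^2/(2\sigma^2)}$, using $(2\sigma^2 t)^{-\alpha_1/2}=(\sqrt{2}\sigma)^{-\alpha_1}t^{-\alpha_1/2}$ and $e^{-r^2/(2\sigma^2)}=e^{-t}$. These reductions turn $\psi(\mu)$ into precisely the exponents displayed in~\eqref{Iq:maternChernoff} and~\eqref{Iq:thomasChernoff}.

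The last step is the minimization over $\mu>0$. Differentiating twice under the integral sign, $\frac{d^2}{d\mu^2}\int_{\mathds{R}^2}(1-e^{-\mu|x|^{-\alpha_1}})f(|x|)\,dx=-\int_{\mathds{R}^2}|x|^{-2\alpha_1}e^{-\mu|x|^{-\alpha_1}}f(|x|)\,dx<0$, so $\psi$ is strictly convex on $(0,\infty)$ and its minimizer is the unique stationary point; setting $\psi'(\mu^{*})=0$ and applying the same change of variables gives exactly~\eqref{Eq:optimalMuMatern} (resp.~\eqref{Eq:optimalMuThomas}), and substituting $\mu^{*}$ back into $e^{\psi(\mu)}$ delivers the claimed bounds. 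I expect this step to be the main obstacle, since the lemma asserts the existence of a \emph{positive} root $\mu^{*}$ without justification: one must check that, for the physically relevant exponents $\alpha_1\ge 2$, $\lim_{\mu\downarrow0}\psi'(\mu)=-\infty$ because $\int_0^{a^2}t^{-\alpha_1/2}e^{-\mu t^{-\alpha_1/2}}\,dt$ (and its Thomas analogue) diverges as $\mu\downarrow0$, while $\psi'(\mu)\uparrow\frac{P_c}{(1-\beta D)\eta g}>0$ as $\mu\to\infty$ by monotone convergence, so that $\psi'$ has a unique zero in $(0,\infty)$ --- with the bound degenerating to the trivial $p_0'\le1$ in the sparse regime ($\bar{m}$ small, or $a,\sigma$ large with $\alpha_1<2$) where $\psi'(0^{+})\ge0$. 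The remaining ingredients --- the generating-functional identity and the variable substitutions --- are routine and parallel the earlier lemmas.
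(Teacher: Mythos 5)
Your proposal follows essentially the same route as the paper's proof: the Chernoff bound \eqref{Eq:Chernoff}, the characteristic functional of the compound-Poisson (shot-noise) sum $\sum_n d_n^{-\alpha_1}$ (which you re-derive via the Poisson probability generating functional), the polar/change-of-variable reduction to the one-dimensional integrals, and convexity of the exponent in $\mu$ followed by the first-order condition. Your additional verification that a positive stationary point $\mu^*$ actually exists (via $\psi'(0^+)=-\infty$ for $\alpha_1\geq 2$ and $\psi'(\mu)\to P_c/((1-\beta D)\eta g)>0$) is a point the paper's proof passes over silently, but it does not change the argument.
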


\begin{proof}
See Appendix \ref{appendix:proposition:2}.
\end{proof}

\begin{remark}[Effects of Backscatter Parameters on Power Outage]\emph{By approximating the power-outage probability $p_0'$ using its upper bounds in Lemma~\ref{Lem:SignalDense}, $p_0'$ scales as $c\exp\l(\frac{\mu^* P_c}{(1 - \beta D)\eta g}\r)$ with $c$ being a constant for both the Matern and Thomas cluster processes.
}
\end{remark}

\begin{remark}[Effects of PB density on Power Outage]\emph{For both the Matern and Thomas cluster processes, the power-outage probability is observed to diminish at least exponentially with the increasing expected number of PBs serving each node, $\bar{m}$.
}
\end{remark}

\section{Network Coverage}
In this section, the coverage of the WP-BackCom network is characterized using the results derived in the preceding section.

First, consider the normal WP-BackCom network model.  The network coverage is quantified by deriving the success probability, $P_s$ defined in \eqref{Eq:Ps}, as follows. The event of successful transmission by the typical backscatter node  occurs under two conditions: 1) the circuit-power constraint in \eqref{Eq:Circuit} is satisfied and 2) under this condition,  the receive SIR exceeds the threshold $\theta$. Therefore, $P_s$ can be written as
\begin{equation}\label{Eq:Ps:Decomp}
P_s =  \Pr\l(P_t h_{X_0}   \geq \theta I\mid P_t \neq 0 \r)\Pr(P_t \neq 0).
\end{equation}
Replacing the transmission power with its minimum value gives a tight lower bound (will be discussed in simulation part) on $P_s$ as follows:
\begin{align}
P_s &\geq \Pr\l(\frac{\beta P_c h_{X_0}}{1-\beta D}  \geq \theta I \r) \Pr(P_t \neq 0) \nn\\
&= \E\l[\exp\l(-\frac{\theta I (1 -\beta D)}{\beta P_c}\r)\r](1 - p_0).  \nn
\end{align}
Then the main result of this section follows by substituting the results derived in the preceding section into the expression above.

\begin{theorem}[Coverage for the Normal Network]\label{Theo:Coverage:Typical}\emph{The \emph{success probability}, denoted as $P_s$,  is lower bounded as
\begin{equation}
P_s \geq (1-p_0) \mathcal{C}\l(\frac{\theta (1 -\beta D)}{\beta P_c}\r),
\end{equation}
where $\mathcal{C}\l(s\r)= \mathcal{C}_a\l(s\r) \mathcal{C}_b\l(s\r)$ is the interference characteristic functional with $\mathcal{C}_a$ and $\mathcal{C}_b$ given in Lemma~\ref{Lem:IntraInt} and Lemma~\ref{Lem:InterInt}, respectively, and $p_0$ is the power outage probability specified in Lemma~\ref{Lem:SigDist}.
}
\end{theorem}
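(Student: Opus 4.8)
The plan is to prove Theorem~\ref{Theo:Coverage:Typical} by directly following the decomposition of the success probability already set up in the text, namely $P_s = \Pr(P_t h_{X_0} \geq \theta I \mid P_t \neq 0)\Pr(P_t \neq 0)$, and then lower-bounding the conditional term. First I would justify the decomposition: successful transmission requires both that the circuit-power constraint in \eqref{Eq:Circuit} holds (the event $\{P_t \neq 0\}$, which occurs with probability $1-p_0$) and that, given this, the SIR $\beta\ell(P_{X_0})h_{X_0}/I = P_t h_{X_0}/I$ exceeds $\theta$. Since $\{X_0\}$ is excluded from $\Phi_{\text{nd}}$ in the definition of $I$ in \eqref{Eq:IntPwr:a}, the interference $I$ does not depend on whether $X_0$ itself is active, so conditioning on $\{P_t\neq 0\}$ affects only the signal term, not $I$ — this is the point that makes the decomposition clean.

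Next I would replace $P_t$ in the conditional probability by its smallest possible value on the event $\{P_t \neq 0\}$. By Lemma~\ref{Lem:SigDist}, conditioned on $P_t \neq 0$ the transmission power has support in $[\tfrac{\beta P_c}{1-\beta D}, \infty)$, so $P_t \geq \tfrac{\beta P_c}{1-\beta D}$ almost surely on that event. Hence
\begin{equation}
\Pr\l(P_t h_{X_0} \geq \theta I \mid P_t \neq 0\r) \geq \Pr\l(\tfrac{\beta P_c}{1-\beta D} h_{X_0} \geq \theta I\r),
\end{equation}
where the right-hand side no longer needs the conditioning since $h_{X_0}$ and $I$ are independent of the event $\{P_t\neq 0\}$ (the fading $h_{X_0}$ is independent of node positions, and $I$ is a sum over $\Phi_{\text{nd}}\setminus\{X_0\}$). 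Now I would use $h_{X_0}\sim\exp(1)$ and condition on $I$: $\Pr(h_{X_0} \geq t \mid I) = e^{-tI}$, so taking $t = \theta I (1-\beta D)/(\beta P_c)$ and averaging over $I$ gives
\begin{equation}
\Pr\l(\tfrac{\beta P_c}{1-\beta D} h_{X_0} \geq \theta I\r) = \E\l[\exp\l(-\tfrac{\theta(1-\beta D)}{\beta P_c} I\r)\r] = \mathcal{C}\l(\tfrac{\theta(1-\beta D)}{\beta P_c}\r),
\end{equation}
by the definition $\mathcal{C}(s) = \E[e^{-sI}]$.

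Finally I would assemble the pieces: combining the above with $\Pr(P_t\neq 0) = 1-p_0$ yields $P_s \geq (1-p_0)\,\mathcal{C}\l(\theta(1-\beta D)/(\beta P_c)\r)$, and then I would invoke the independence of intra-cluster and inter-cluster interference — $I = I_a + I_b$ with $I_a, I_b$ independent because $I_a$ depends only on the representative cluster's daughter points while $I_b$ depends on the other clusters, which are generated independently in the PCP construction — to factor $\mathcal{C}(s) = \E[e^{-sI_a}]\E[e^{-sI_b}] = \mathcal{C}_a(s)\mathcal{C}_b(s)$, with $\mathcal{C}_a$ and $\mathcal{C}_b$ supplied by Lemma~\ref{Lem:IntraInt} and Lemma~\ref{Lem:InterInt}. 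I do not expect a serious obstacle here since the heavy lifting (the characteristic functionals and the transmission-power distribution) is done in the earlier lemmas; the only subtlety worth stating carefully is the independence claim used to split $\mathcal{C}$ into $\mathcal{C}_a\mathcal{C}_b$ and the fact that conditioning on $\{P_t\neq 0\}$ leaves $I$ and $h_{X_0}$ untouched, so that the bound $P_t \geq \beta P_c/(1-\beta D)$ can be pulled out without disturbing the law of the interference.
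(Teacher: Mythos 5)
Your argument is essentially the paper's own proof: the paper establishes Theorem~\ref{Theo:Coverage:Typical} by exactly this chain --- the decomposition \eqref{Eq:Ps:Decomp}, replacing $P_t$ by its minimum value $\beta P_c/(1-\beta D)$ on the event $\{P_t \neq 0\}$, averaging the $\exp(1)$ fading to obtain $\E[e^{-sI}]$ at $s=\theta(1-\beta D)/(\beta P_c)$, and then substituting Lemmas~\ref{Lem:IntraInt}, \ref{Lem:InterInt} and \ref{Lem:SigDist}. The one caveat is that your stated reason for dropping the conditioning on $\{P_t\neq 0\}$ (that $X_0$ is excluded from the sum defining $I$) does not fully cover the intra-cluster term, since $I_a$ and the power-outage event both depend on the location of $Y_0$; the paper silently makes the same independence assumption, so this wrinkle does not distinguish your proof from the paper's.
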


\begin{remark}[Effects of $p_0$ on Network Coverage] \emph{The success probability  is observed to increase \emph{linearly} with the \emph{transmission  probability} of a backscatter node, $(1 - p_0)$, which agrees with intuition.}
\end{remark}

\begin{remark}[Effects of $D$ and $\beta$ on Network Coverage] \label{Re:Coverage}\emph{ The success probability $P_s$ can be maximized over the reflection coefficient $\beta$. A too large or a too small value for $\beta$ has a negative effect on network coverage (or the success probability). On one hand,  increasing $\beta$ not only adds the probability of circuit power outage but also scales up transmission power for each node, which can lead to strong interference. On the other hand, $\beta$ being too small leads to  weak receive signal  as well as larger effective interference set $\mathds{O}$ given in (\ref{eq:effectiveI}). Both decrease $P_s$. Next,  increasing the duty cycle $D$ causes growth of the   circuit-power outage probability and dense interferers, thereby reducing $P_s$.}
\end{remark}

Next, consider the network model with dense PBs.  Following the same procedure as for deriving Theorem~\ref{Theo:Coverage:Typical} and using Lemmas~\ref{Lem:densePBmodel} and \ref{Lem:SignalDense}, the coverage for the current network model is characterized as shown in the following theorem.

\begin{theorem}[Coverage for the  Network with Dense PBs] \label{Coro:denseSP_new} \emph{The success probability, denoted as $P_s'$,  satisfies
\begin{equation}
P_s' \geq \l(1-p_0^{\text{ub}}\r) \mathcal{C}^{\text{lb}}\l(\frac{\theta (1 -\beta D)}{\beta P_c}\r),
\end{equation}
where $\mathcal{C}^{\text{lb}}$ denotes the lower bound on the interference characteristic functional  in Lemma~\ref{Lem:densePBmodel} and $p_0^{\text{ub}}$ is the  upper bound on the  power-outage probability in  Lemma~\ref{Lem:SignalDense}.
}
\end{theorem}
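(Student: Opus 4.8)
The plan is to mirror, essentially verbatim, the derivation of Theorem~\ref{Theo:Coverage:Typical}, replacing each ingredient by its dense-PB counterpart. Starting from the definition $P_s' = \Pr\l(\beta \ell(P_{X_0}')h_{X_0} \geq \theta I'\r)$ with $I'$ the interference in \eqref{Eq:IntPwr:Dense}, first condition on whether the circuit-power constraint holds at the typical node, exactly as in \eqref{Eq:Ps:Decomp}: write $P_s' = \Pr\l(P_t' h_{X_0} \geq \theta I' \mid P_t'\neq 0\r)\Pr(P_t'\neq 0)$, where $P_t' = \beta \ell(P_{X_0}')$ is the dense-PB transmission power studied in Lemma~\ref{Lem:SignalDense}.

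Next I would lower-bound the conditional success probability by replacing $P_t'$ with its minimum admissible value $\beta P_c/(1-\beta D)$ (the circuit-power threshold), giving
\begin{align}
P_s' &\geq \Pr\l(\frac{\beta P_c h_{X_0}}{1-\beta D} \geq \theta I'\r)\Pr(P_t'\neq 0) \nn\\
&= \E\l[\exp\l(-\frac{\theta(1-\beta D)}{\beta P_c}I'\r)\r](1-p_0'), \nn
\end{align}
where the equality uses $h_{X_0}\sim\exp(1)$ independent of $I'$, so the expectation over $h_{X_0}$ produces exactly the characteristic functional of $I'$ evaluated at $s = \theta(1-\beta D)/(\beta P_c)$, i.e.\ $\mathcal{C}'(s)$. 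Two more bounding steps finish the argument: (i) $\Pr(P_t'\neq 0) = 1 - p_0' \geq 1 - p_0^{\text{ub}}$ by the upper bound of Lemma~\ref{Lem:SignalDense}; and (ii) $\mathcal{C}'(s) \geq \mathcal{C}^{\text{lb}}(s)$ by Lemma~\ref{Lem:densePBmodel}. Combining the two inequalities (both sides are nonnegative, so the product of lower bounds is a lower bound on the product) yields $P_s' \geq (1-p_0^{\text{ub}})\,\mathcal{C}^{\text{lb}}\l(\frac{\theta(1-\beta D)}{\beta P_c}\r)$, which is the claim.

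The one subtlety worth checking carefully is the independence needed to decouple the $h_{X_0}$-expectation from $I'$: the fading coefficient $h_{X_0}$ on the signal link appears in neither the interference sum $I'$ nor in $P_t'$ (which depends only on the PB cluster $\mathcal{N}_{X_0}$ through $P_{X_0}'$), so conditioning on everything except $h_{X_0}$ and integrating it out is legitimate, and this is the same step already used implicitly in Theorem~\ref{Theo:Coverage:Typical}. A second minor point is that Lemma~\ref{Lem:densePBmodel} already bounds $\mathcal{C}'(s)$ using the \emph{relaxed} (enlarged) interference $I'$ of \eqref{De:InterDense}, so one must note that enlarging the interference only decreases $\E[e^{-sI'}]$, hence the chain of inequalities still points the right way. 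No genuinely hard step arises here — the theorem is a direct transcription of Theorem~\ref{Theo:Coverage:Typical} with the PPP-of-PBs replaced by the PCP-of-PBs model, and the only real content is bookkeeping to ensure every approximation ($P_t'\mapsto$ minimum, $p_0'\mapsto p_0^{\text{ub}}$, $\mathcal{C}'\mapsto\mathcal{C}^{\text{lb}}$, $I'\mapsto$ relaxed $I'$) is a one-sided bound in the direction that preserves the lower bound on $P_s'$.
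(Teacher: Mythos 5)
Your proposal is correct and follows exactly the route the paper intends: the paper's own justification of this theorem is simply ``following the same procedure as for deriving Theorem~\ref{Theo:Coverage:Typical} and using Lemmas~\ref{Lem:densePBmodel} and \ref{Lem:SignalDense},'' and your write-up fills in precisely that procedure, with the one-sided bounds ($P_t'$ replaced by its minimum $\beta P_c/(1-\beta D)$, $p_0'\leq p_0^{\text{ub}}$, $\mathcal{C}'\geq\mathcal{C}^{\text{lb}}$ via the relaxed interference) all pointing in the right direction. Your remarks on the independence of $h_{X_0}$ from $I'$ and on the sign of the relaxation in \eqref{De:InterDense} are accurate and, if anything, more careful than the paper's own one-line argument.
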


\subsection{Extension to the Network with Dense Micro-PBs}
In this section, we extend the network coverage analysis for the case of dense PBs to the extreme case where the number of cheap micro-PBs per cluster is infinite under a constraint on the sum PB-transmission power per cluster, denoted as $P_{\text{sum}}$. This scenario is of practical interest since the WPT efficiency can be improved by deploying dense small PBs without causing large total power consumption. As a result of the law of large numbers, the transmission power of each backscatter node is stabilized, simplifying the analysis. The network model for this extreme case is modified from the network model with dense PBs by setting the transmission power of each PB in an arbitrary cluster $\mathcal{N}$  to be $P_{\text{sum}}/|\mathcal{N}|$. The power diminishes as the number of PBs in the cluster increases, giving the name ``dense micro-PBs''. Furthermore, the \emph{truncated} path loss model \cite{Baccelli:AlohaProtocolMultihopMANET} is used for WPT links to avoid infinite receive power at nodes due to singularity in the model without truncation. Specifically, given a constant $0< \nu < a$ and transmission power $P$ at a PB at $Y$, the resultant receive power at a target node at $X$ is given as $P[\max(\nu, |X- Y|)]^{-\alpha_1}$.

The sum power received at the typical transmitting node is
\begin{equation}\label{Eq:RX:Pwr}
P_{X_0}'' = \eta g P_{\text{sum}} \lim_{\bar{m}\rightarrow\infty}\frac{1}{|\mathcal{N}_{X_0}|} \sum_{Y\in \mathcal{N}_{X_0}} [\max(\nu, |X_0- Y|)]^{-\alpha_1}.
\end{equation}
Since  $|\mathcal{N}_{X_0}|$ is a Poisson r.v. with mean $\bar{m}$, it is well known that as $\bar{m}\rightarrow\infty$, $|\mathcal{N}_{X_0}|\rightarrow\infty$ almost surely (a.s.). Using this fact as well as that the terms in the sum in \eqref{Eq:RX:Pwr} are i.i.d., it follows from the law of large numbers that
\begin{align}
P_{X_0}'' = 2 \pi \eta g P_{\text{sum}} \int^{\infty}_{0} [\max(\nu, r)]^{-\alpha_1} r f(r) dr, \qquad \qquad \textrm{a.s.}.
\end{align}
By substituting the PDFs in \eqref{Eq:PDF:Matern} and \eqref{Eq:PDF:Thomas}, it is straightforward to obtain the following result.

\begin{lemma}\emph{For the network with dense micro-PBs, as $\bar{m}\rightarrow\infty$, the power received at the typical transmitting node converges as follows.
\begin{itemize}
\item[--] (Matern cluster process)
\begin{align}
P_{X_0}'' = \frac{\eta g P_{\text{sum}}}{a^2} \Big( \frac{\pi}{\alpha_{1}-2} (\nu^{2-\alpha_{1}} - a^{2-\alpha_{1}}) + \nu^{2-\alpha_{1}} \Big), \qquad \qquad \textrm{a.s.},\label{Eq:denseMaternPower}
\end{align}
\item[--] (Thomas cluster process)
\begin{align}
P_{X_0}'' = \eta g P_{\text{sum}} \Big( \nu^{-\alpha_{1}}(1-e^{-\nu}) + (\sqrt{2}\sigma)^{-\alpha_{1}} \Gamma\Big(1-\frac{\alpha_{1}}{2}, \frac{\nu^{2}}{2\sigma^{2}}\Big) \Big), \qquad \qquad \textrm{a.s.}. \label{Eq:denseThomasPower}
\end{align}
\end{itemize}
}
\end{lemma}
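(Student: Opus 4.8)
The plan is to evaluate the almost-sure limit obtained from the law of large numbers by substituting the two concrete PDFs from \eqref{Eq:PDF:Matern} and \eqref{Eq:PDF:Thomas} into the integral
\begin{align}
P_{X_0}'' = 2 \pi \eta g P_{\text{sum}} \int^{\infty}_{0} [\max(\nu, r)]^{-\alpha_1} r f(r) dr, \nn
\end{align}
and computing the resulting elementary (Matern) and Gamma-type (Thomas) integrals. First I would split the range of integration at $r = \nu$, writing $\int_0^\infty = \int_0^\nu + \int_\nu^\infty$, since $\max(\nu,r) = \nu$ on the first piece and $\max(\nu,r) = r$ on the second.

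For the Matern case, $f(r) = \frac{1}{\pi a^2}$ on $[0,a]$, so the factor $2\pi \cdot \frac{1}{\pi a^2} = \frac{2}{a^2}$ comes out, and one is left with $\frac{\eta g P_{\text{sum}}}{a^2}\big(\nu^{-\alpha_1}\int_0^\nu r\,dr + \int_\nu^a r^{1-\alpha_1}\,dr\big)$. The first integral is $\nu^2/2$ and the second is $\frac{1}{2-\alpha_1}(a^{2-\alpha_1} - \nu^{2-\alpha_1})$ (assuming $\alpha_1 > 2$ so the WPT path-loss exponent makes the integral finite; this is the standard regime). Rearranging signs to write it as $\frac{\pi}{\alpha_1 - 2}(\nu^{2-\alpha_1} - a^{2-\alpha_1})$ requires absorbing a factor — here I note that the stated form \eqref{Eq:denseMaternPower} carries a $\pi$ in that term, which suggests the bookkeeping of the $2\pi$ prefactor should be tracked carefully; I would verify that the $\nu^{2-\alpha_1}$ term outside the fraction indeed comes from $\nu^{-\alpha_1}\cdot \nu^2$ and matches. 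The remaining work is pure algebra and collecting terms.

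For the Thomas case, $f(r) = \frac{1}{2\pi\sigma^2}e^{-r^2/(2\sigma^2)}$, so the $2\pi$ cancels the $2\pi$ and one gets $\frac{\eta g P_{\text{sum}}}{\sigma^2}\big(\nu^{-\alpha_1}\int_0^\nu r e^{-r^2/(2\sigma^2)}\,dr + \int_\nu^\infty r^{1-\alpha_1} e^{-r^2/(2\sigma^2)}\,dr\big)$. The first integral evaluates in closed form to $\sigma^2(1 - e^{-\nu^2/(2\sigma^2)})$ via the substitution $u = r^2/(2\sigma^2)$. For the second, the same substitution $u = r^2/(2\sigma^2)$, so $r = \sqrt{2\sigma^2 u}$ and $r\,dr = \sigma^2\,du$, turns $\int_\nu^\infty r^{1-\alpha_1} e^{-r^2/(2\sigma^2)}\,dr$ into $\sigma^2 (2\sigma^2)^{-\alpha_1/2}\int_{\nu^2/(2\sigma^2)}^\infty u^{-\alpha_1/2} e^{-u}\,du = \sigma^2 (\sqrt2\,\sigma)^{-\alpha_1}\,\Gamma\!\big(1-\tfrac{\alpha_1}{2}, \tfrac{\nu^2}{2\sigma^2}\big)$ by the definition of the upper incomplete Gamma function. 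Dividing through by $\sigma^2$ and assembling gives \eqref{Eq:denseThomasPower}. (The exponent in the $(1 - e^{-\nu})$ term of the stated result appears to be $-\nu$ rather than $-\nu^2/(2\sigma^2)$, which I would double-check against the substitution.)

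I do not expect any genuine obstacle: the convergence to the integral is already established in the text via the law of large numbers (using that $|\mathcal{N}_{X_0}| \to \infty$ a.s. as $\bar m \to \infty$ for a Poisson r.v. with mean $\bar m$, and that the summands are i.i.d. with the common distribution $f$), and the truncated path-loss model guarantees the integrand is bounded near $r = 0$, so the expectation defining the limit is finite and the strong law applies. The only care points are: (i) the regime condition $\alpha_1 > 2$ needed for convergence of the tail integral in the Matern case; (ii) correct tracking of the $2\pi$ and $\sigma^2$ (resp. $a^2$) prefactors; and (iii) recognizing the tail integral as an upper incomplete Gamma function. The rest is routine calculus, so the lemma follows by direct substitution and integration.
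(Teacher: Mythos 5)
Your proposal is correct and takes exactly the same route as the paper, which offers no separate proof beyond ``substituting the PDFs'' of \eqref{Eq:PDF:Matern} and \eqref{Eq:PDF:Thomas} into the law-of-large-numbers integral $2\pi\eta g P_{\text{sum}}\int_0^\infty[\max(\nu,r)]^{-\alpha_1}rf(r)\,dr$ and splitting at $r=\nu$. The two discrepancies you flag are in fact typos in the stated lemma rather than errors in your derivation: the Matern coefficient should be $\frac{2}{\alpha_1-2}$ rather than $\frac{\pi}{\alpha_1-2}$ (since $2\pi\cdot\frac{1}{\pi a^2}=\frac{2}{a^2}$), and the Thomas exponential should read $1-e^{-\nu^2/(2\sigma^2)}$ rather than $1-e^{-\nu}$, exactly as your substitution $u=r^2/(2\sigma^2)$ shows.
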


Combining the results and the circuit-power constraint leads to the following lemma.

\begin{lemma}[Conditions for Backscatter Communication with Dense Micro-PBs]\label{Lem:BCConditions} \emph{Under the circuit-power constraint, enabling backscatter transmission requires the sum PB-power per cluster to satisfy the following conditions:
\begin{itemize}
\item[--] (Matern cluster process)
\begin{equation}
P_{\text{sum}} \geq \frac{P_c a^2}{\eta g (1-\beta D)\Big( \frac{\pi}{\alpha_{1}-2} (\nu^{2-\alpha_{1}} - a^{2-\alpha_{1}}) + \nu^{2-\alpha_{1}} \Big)},
\end{equation}
\item[--] (Thomas cluster process)
\begin{equation}
P_{\text{sum}} \geq \frac{P_c}{\eta g (1-\beta D) \Big( \nu^{-\alpha_{1}}(1-e^{-\nu}) + (\sqrt{2}\sigma)^{-\alpha_{1}} \Gamma\Big(1-\frac{\alpha_{1}}{2}, \frac{\nu^{2}}{2\sigma^{2}}\Big) \Big)}.
\end{equation}
\end{itemize}
}
\end{lemma}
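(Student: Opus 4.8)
The plan is to reduce the claim to a single deterministic inequality and then isolate $P_{\text{sum}}$. The key observation is that in the dense micro-PB limit $\bar m \to \infty$ the power received at the typical transmitting node stops being random: by the preceding lemma, $P_{X_0}''$ converges almost surely to the explicit constant given in \eqref{Eq:denseMaternPower} for the Matern cluster process and in \eqref{Eq:denseThomasPower} for the Thomas cluster process. Since backscatter transmission is enabled precisely when the circuit-power constraint \eqref{Eq:Circuit} holds, i.e.\ when $P_{X_0}'' \geq P_c/(1-\beta D)$, the condition on $P_{\text{sum}}$ is obtained by substituting these closed forms and rearranging.

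Concretely, I would set $\kappa_{\mathsf{M}} = \frac{\pi}{\alpha_1 - 2}\,(\nu^{2-\alpha_1} - a^{2-\alpha_1}) + \nu^{2-\alpha_1}$ for the Matern case, so that \eqref{Eq:denseMaternPower} reads $P_{X_0}'' = \eta g P_{\text{sum}}\,\kappa_{\mathsf{M}}/a^2$; the requirement $P_{X_0}'' \geq P_c/(1-\beta D)$ then rearranges to $P_{\text{sum}} \geq P_c a^2 / [\eta g (1-\beta D)\,\kappa_{\mathsf{M}}]$, which is exactly the stated bound. For the Thomas case I would set $\kappa_{\mathsf{T}} = \nu^{-\alpha_1}(1-e^{-\nu}) + (\sqrt 2\,\sigma)^{-\alpha_1}\,\Gamma(1-\tfrac{\alpha_1}{2},\tfrac{\nu^2}{2\sigma^2})$, so that \eqref{Eq:denseThomasPower} reads $P_{X_0}'' = \eta g P_{\text{sum}}\,\kappa_{\mathsf{T}}$, and the identical manipulation yields $P_{\text{sum}} \geq P_c / [\eta g (1-\beta D)\,\kappa_{\mathsf{T}}]$.

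The only point that needs a short argument --- and the nearest thing to an ``obstacle'' --- is that dividing through by $\kappa_{\mathsf{M}}$ (resp.\ $\kappa_{\mathsf{T}}$) preserves the inequality, i.e.\ that these coefficients are strictly positive. For $\kappa_{\mathsf{M}}$: in the standard path-loss regime $\alpha_1 > 2$ (already required for \eqref{Eq:denseMaternPower} to be finite) the exponent $2 - \alpha_1$ is negative, and $0 < \nu < a$, so $\nu^{2-\alpha_1} > a^{2-\alpha_1} > 0$ and both summands of $\kappa_{\mathsf{M}}$ are positive. For $\kappa_{\mathsf{T}}$: the first summand is positive since $\nu > 0$, and the upper incomplete Gamma function $\Gamma(s,x) = \int_x^\infty t^{s-1}e^{-t}\,dt$ is strictly positive for every $x > 0$ and every real $s$, so the second summand is positive too. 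I do not expect any genuine difficulty beyond this; once the preceding lemma is available the proof is pure algebra. Finally, to justify the word ``requires'' one notes that the almost-sure convergence renders the limiting received power deterministic and identical for every node, so the displayed inequality is in fact both necessary and sufficient for enabling backscatter transmission in the limit.
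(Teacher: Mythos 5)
Your proposal is correct and is exactly the argument the paper intends: the paper gives no separate proof, stating only that ``combining the results and the circuit-power constraint leads to the following lemma,'' i.e.\ substitute the almost-sure limits \eqref{Eq:denseMaternPower} and \eqref{Eq:denseThomasPower} into $P_{X_0}'' \geq P_c/(1-\beta D)$ and rearrange. Your added check that the coefficients are strictly positive (using $\alpha_1>2$ and $0<\nu<a$) is a sensible, if routine, justification for the division step that the paper leaves implicit.
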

It can be observed that the sum PB-transmission power for turning on transmitting nodes   scales linearly with the circuit power consumption and decreases  with reducing duty cycle and reflection coefficient as $\frac{1}{1 -\beta D}$.

Last, given constant node-transmission power, the WP-BackCom network is identical to the conventional Poisson distributed MANET (see e.g., \cite{weber2010overview}). If the conditions in Lemma~\ref{Lem:BCConditions} are satisified, the success probability is given as
\begin{equation}\label{Eq:Ps:MicroPB}
P_s'' = \exp\Big( - \frac{2\pi D}{\alpha_{2}}\lambda_{\text{nd}} \theta^{\frac{2}{\alpha_{2}}} \mathcal{B}\Big( \frac{2}{\alpha_{2}}, 1-\frac{2}{\alpha_{2}} \Big)\Big),
\end{equation}
where $\mathcal{B}(x,y)$ is the beta function defined by $\mathcal{B}(x,y) := \int^{1}_{0} t^{x-1}(1-t)^{y-1} dt $.
Increasing the duty cycle densifies backscatter links and thereby strengthens their mutual interference. It is worth noting that the location of typical receiver $z$ does not affect the final result under this model due to the stationary of PPP formed by transmitting nodes. This results in the exponential decay of the success probability with increasing $D$ as observed from \eqref{Eq:Ps:MicroPB}.

\section{Network Capacity}\label{Section:Capacity}

Direct maximization of the network capacity is intractable as the success probability analyzed in the preceding section has no closed form. In this section, for tractability, we consider a WP-BackCom network with \emph{almost-full} network coverage such that transmitted data is always successfully received almost surely. Using \eqref{Eq:Ps:Decomp}, the successful probability can be approximated as $P_s \approx  1 - p_0$. The network capacity is analyzed and optimized for the normal network model. The analysis and discussion can be extended to the network with dense PBs straightforward, which is thus omitted for brevity.

Given $P_s\approx 1 - p_0$, the transmission capacity defined in \eqref{Eq:TXCap} reduces to the density of backscatter nodes:
\begin{equation}
C \approx \lambda_{\text{pb}} \bar{c} D (1-p_0). \label{Eq:Cap:Typical}
\end{equation}
Substituting the results in Lemma~\ref{Lem:SigDist} into \eqref{Eq:Cap:Typical} gives the following:
\begin{itemize}
\item[--] (Matern cluster process)
\begin{equation}
C \approx  \frac{\lambda_{\text{pb}} \bar{c} D}{a^2}\l[\frac{\eta g (1 - \beta D) }{P_c}\r]^{\frac{2}{\alpha_1}},  \label{Eq:Cap:Typical:Matern}
\end{equation}
\item[--] (Thomas cluster process)
\begin{equation}
C \approx  \lambda_{\text{pb}} \bar{c} D \l[1 - \exp\l( -\frac{1}{2\sigma^2}\l(\frac{\eta g (1-\beta D)  }{P_c}\r)^{\frac{2}{\alpha_1}}\r)\r]. \label{Eq:Cap:Typical:Thomas}
\end{equation}
\end{itemize}
First of all, the network capacity is observed to be proportional to the density of backscatter nodes that is consistent with intuition.

\begin{remark}[Effect of $D$ on Network Capacity]\label{Re:Duty} \emph{
Increasing the duty cycle $D$ has two opposite effects on the network capacity, namely increasing the backscatter-node density but reducing transmission probability due to less harvested energy. Therefore, the capacity can be optimized over $D$. For  the model based on the Matern cluster process, the maximum capacity is
\begin{equation}
\max_D C(D) = \frac{\lambda_{\text{pb}} \bar{c}\alpha_1}{a^2(2 + \alpha_1 \beta)}\l[\frac{2 \eta g }{P_c (2 + \alpha_1\beta)}\r]^{\frac{2}{\alpha_1}}, \label{Eq:optimalCapacity}
\end{equation}
and the optimal duty cycle is given as $D^* = \min\l(1, \frac{\alpha_1}{2+\alpha_1 \beta}\r)$. This assumes that $D^*$  is within the constrained range  discussed in Remark~\ref{Re:Cap}.  The capacity optimization for the case of Thomas cluster process is similar but more tedious.}
\end{remark}

\begin{remark}[Effect of $\beta$ on Network Capacity] \label{Re:Reflect} \emph{The effect of $\beta$  on network capacity is  not entirely the same as $D$.  In the  regime of almost-full network coverage,  $C$ is a monotone decreasing function of $\beta$ as observed from \eqref{Eq:Cap:Typical:Matern} and \eqref{Eq:Cap:Typical:Thomas}.
The reason is that  a large reflection coefficient leads to less harvested energy and thereby reduces the backscatter-node density.
 }
\end{remark}

\begin{remark}[Constraints on $D$ and $\beta$] \label{Re:Cap}\emph{It is clear from Remark~\ref{Re:Coverage} that the consideration of the  operational regime of almost-full coverage  constraints $D$ and $\beta$ to be in certain ranges to ensure link reliability. The capacity results in this and next sub-sections hold only for the parameters falling in these ranges.  The corresponding region for $(\beta, D)$ can be derived by bounding the conditional  probability in \eqref{Eq:Ps:Decomp} by a positive value close to one. For instance, using Theorem~$1$, an inner bound of  the region can be derived as
\begin{equation}
\l\{(D, \beta) \in [0, 1]^2  \mid \mathcal{C}\l(\frac{\theta (1 -\beta D)}{\beta P_c}\r) \geq 1 - \epsilon \r\},
\end{equation}
where the positive constant $\epsilon \approx 0$ and $\mathcal{C}$ is the characteristic functional of interference.}
\end{remark}

\section{Simulation Results}

The parameters for the simulation are set as follows unless stated otherwise. The PB transmission power $\eta$ = 40 dBm (10 W) and circuit power is  $P_{c}$ = 7 dBm. The SIR threshold is set as $\theta = -5$ dB in the typical range for ensuring almost-full network coverage (see e.g., \cite{28andrews2011tractable}). The path loss exponents for WPT and backscatter communication links are  $\alpha_{1}$ = 3 and $\alpha_{2}$ = 3, respectively. The reflection coefficient $\beta = 0.6$ and duty cycle $D= 0.4$. The PB density is $\lambda_{\text{pb}} = 0.2\  /\textrm{m}^{2}$ and the expected number of nodes (or PBs) in each cluster $\bar{c} = 3$ (or $\bar{m} = 3$ ). The transmission distance for D2D link is set as $1$ m. The network model based on the Thomas cluster process is assumed with the parameter $\sigma^{2}$ = 4.

\begin{figure}[t]
\centering
\includegraphics[width=8.5cm]{./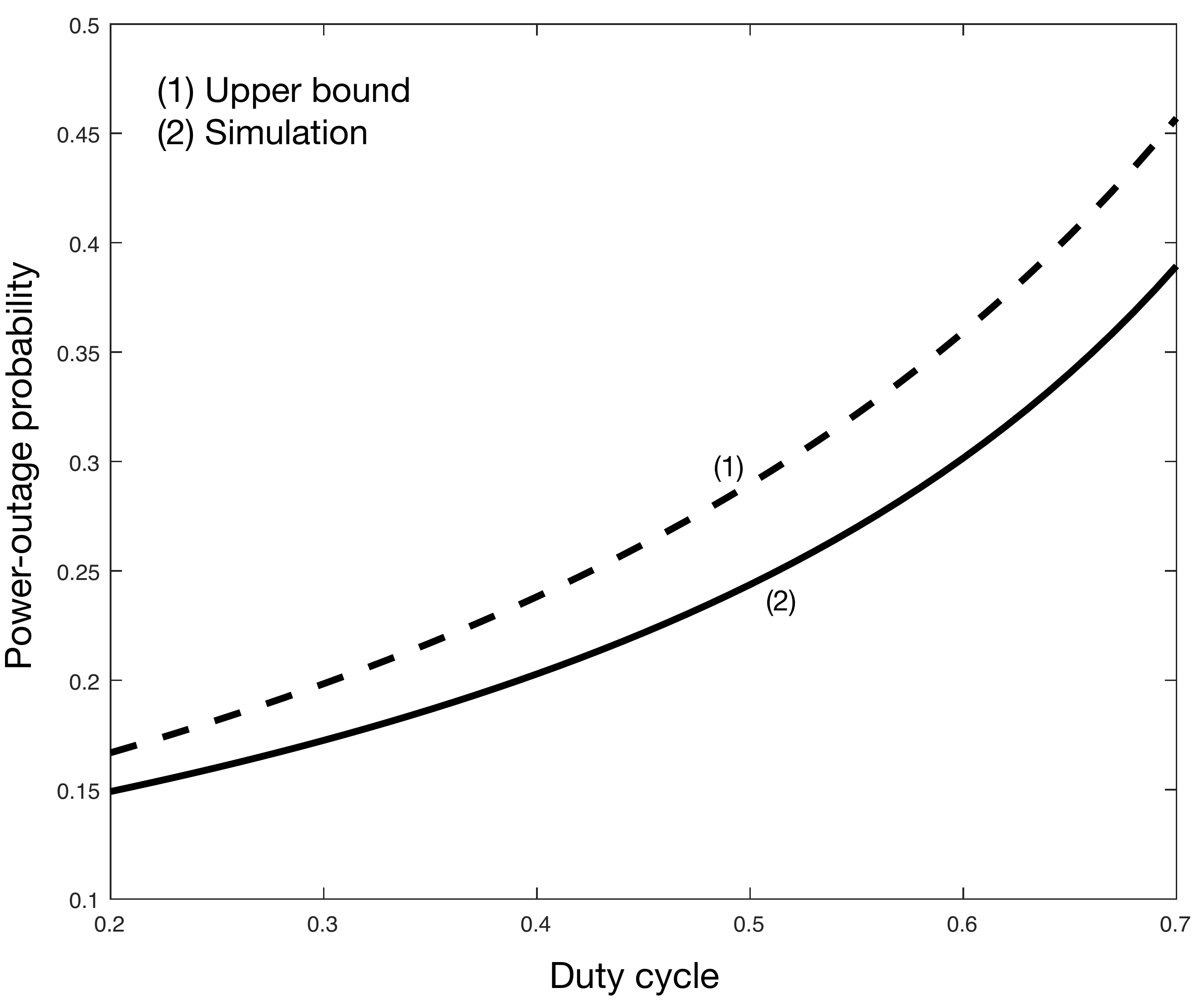}
\caption{Power-outage probability versus duty cycle.}\label{powerOutage}
\end{figure}

The the upper bound on the power-outage probability  in Lemma~\ref{Lem:SignalDense} for the network with dense PBs is evaluated in Fig.~\ref{powerOutage} plotting the probability against the duty cycle $D$. The derived bound is found to be tight. Furthermore, the power-outage probability is observed to grow with increasing $D$ and converge to $0$ as $D$ approaches $0$.  The reason is that large $D$ means that more time is spent on backscattering and less on energy harvesting, reducing the amount of harvested energy for operating the node circuits. The bound is also found to be tight by varying the circuit power consumption and reflection coefficient. The corresponding simulation results are omitted for brevity.

\begin{figure}[t]
\centering
\includegraphics[width=8.5cm]{./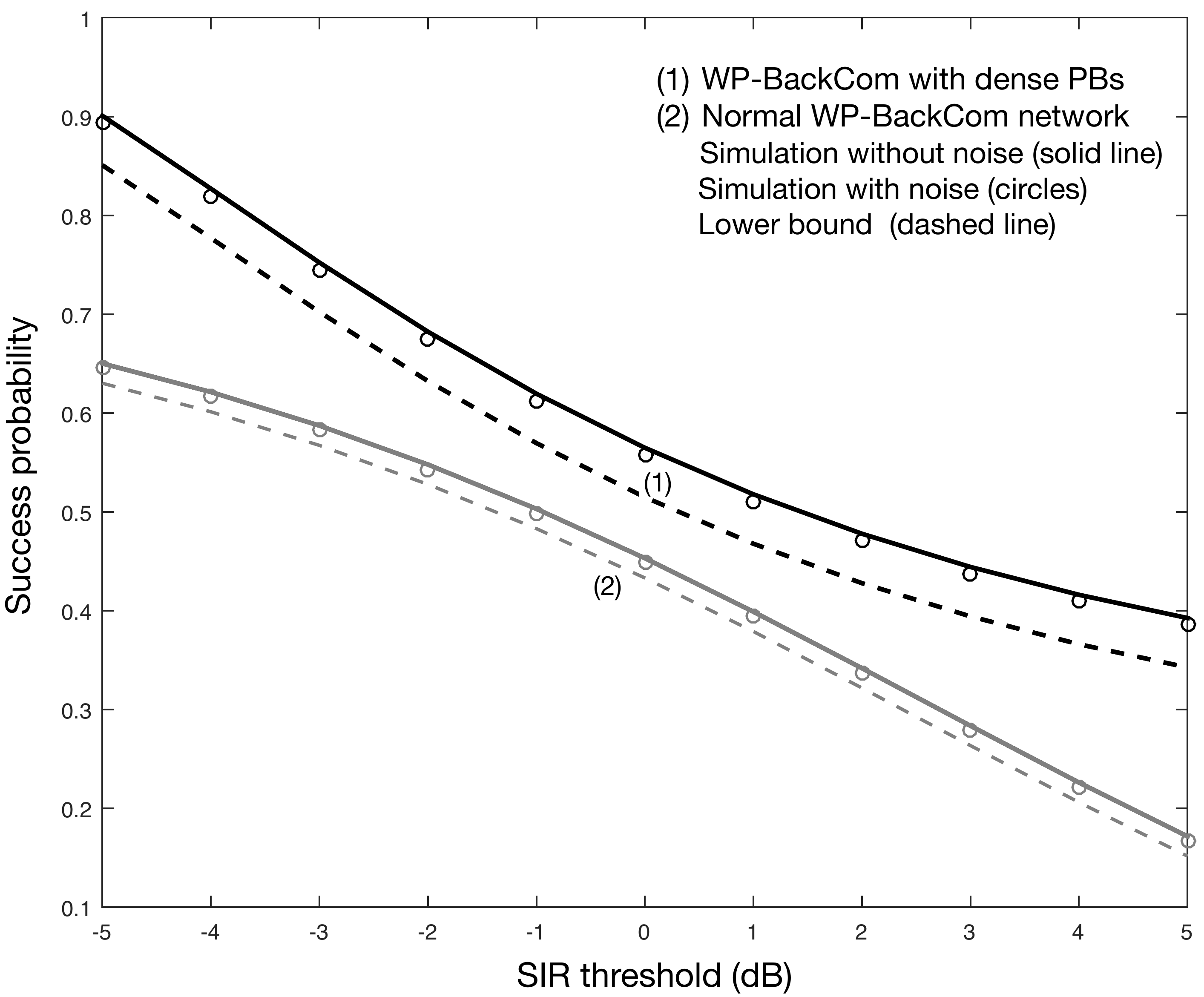}
\caption{Success probability versus SIR threshold for both the normal WP-BackCom network and the network with dense PBs (with and without thermal noise). For the case with noise, the noise variance is set as $-90$ dBm. }\label{threeModels}
\end{figure}

In Fig.~\ref{threeModels}, the success probability for backscatter communication is plotted against the SIR threshold $\theta$ (in dB) for both the normal WP-BackCom network and the network with dense PBs with and without considering the impact of thermal noise. The derived theoretic lower bound is also plotted for comparison. One can see that deploying dense PBs substantially increases the success probability by reducing the likelihood of power outage. Moreover, the theoretic bounds on the success probability are observed to be tight for both networks. Fig.~\ref{threeModels} also illustrates the effects of the thermal noise on success probability. One can  see that the simulation results with (circle points) and without noise (solid lines) are close to each other, which validates the assumption of interference limited network.

\begin{figure}[t]
\centering
\subfigure[Effect of the reflection coefficient]{\includegraphics[width=8cm]{./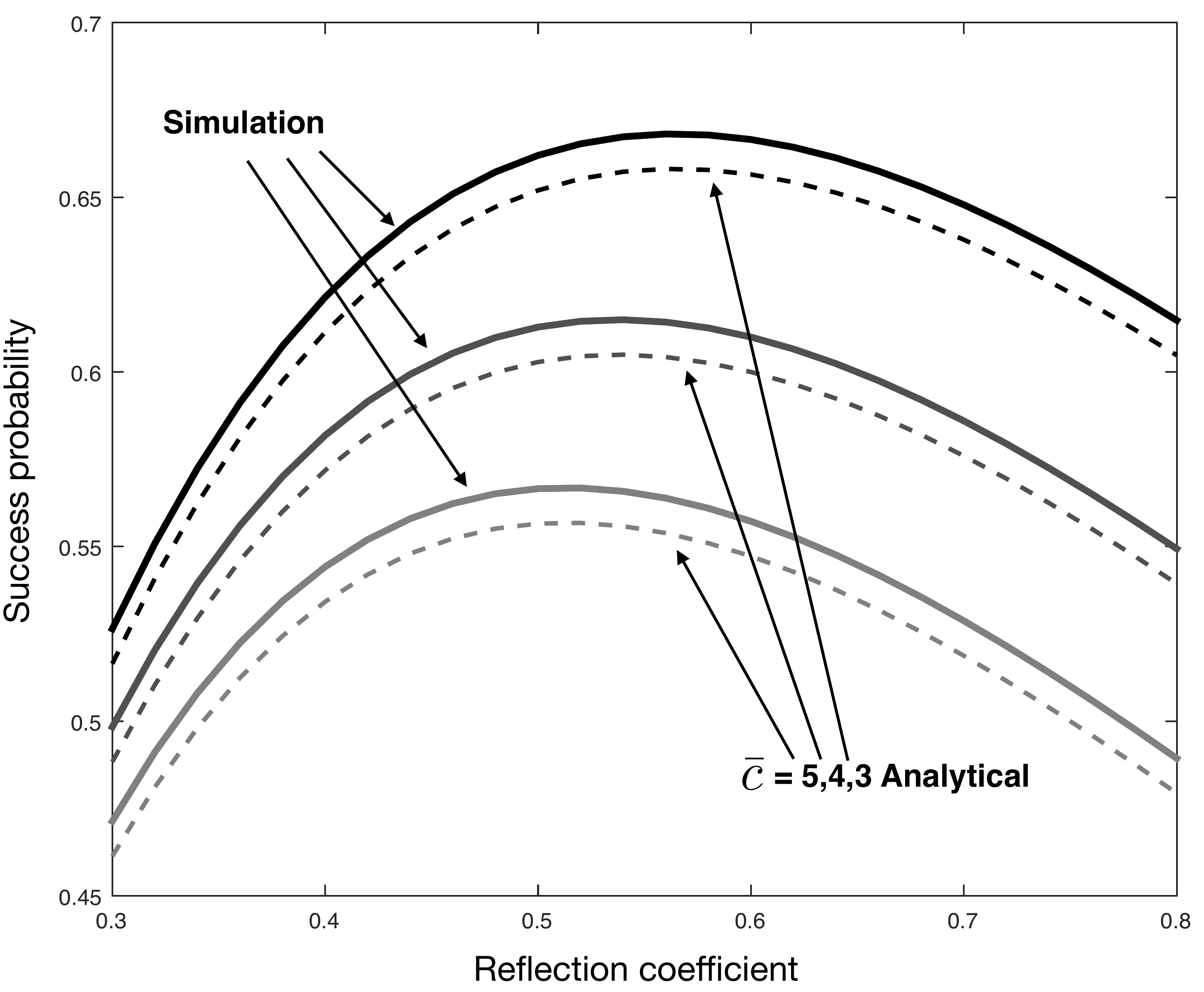}}
\subfigure[Effect of the duty cycle]{\includegraphics[width=8cm]{./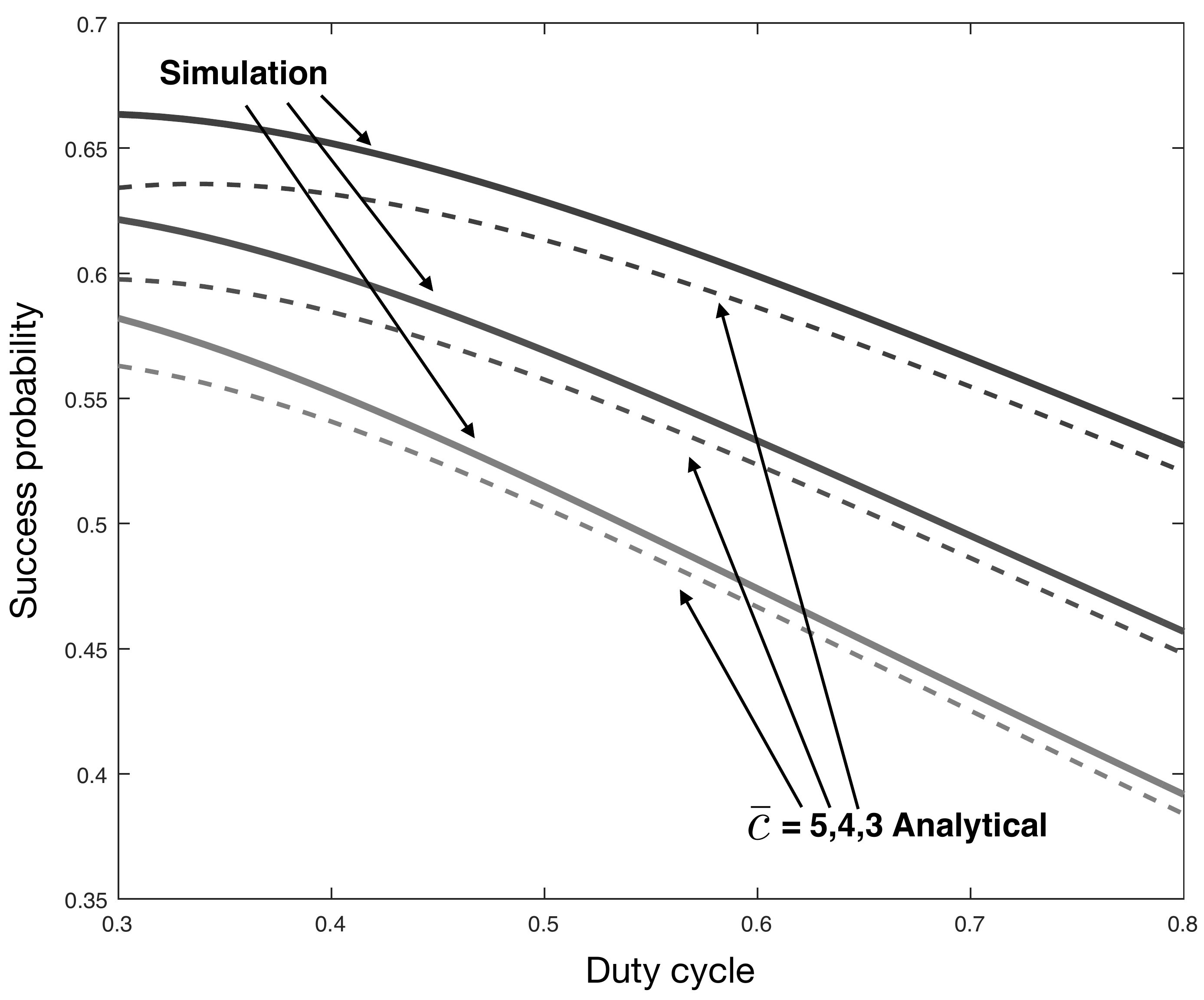}}
\caption{The effects of the backscatter parameters, namely the reflection coefficient and duty cycle, on the success probability for the normal WP-BackCom network with a variable expected number of backscatter nodes per cluster,  $\bar{c} \in \{3,4,5\}$. }\label{cpSim}
\end{figure}

The curves of success probability versus the duty cycle and reflection coefficient are plotted in  Fig.~\ref{cpSim} for different values of $\bar{c}$. The curves based on the analytical results in Theorem~\ref{Theo:Coverage:Typical} are plotted for comparison. It is observed that the theoretical lower  bounds are tight. The curves show that the success probability is the concave function of the reflection coefficient, which is consistent with the discussion in Remark~\ref{Re:Coverage}. The optimal value for the reflection coefficient is observed to be about $0.5 - 0.6$. Consider duty cycle $D$ in Fig.~\ref{cpSim}(b), the success probability reduces with increasing $D$. It is intuitive to understand because larger $D$ results in higher density of interference and larger power-outage probability. Furthermore, the lower bounds  are not very tight when the value of $D$ is small. The reason is that smaller $D$ leads to higher receive power and transmission probability at receiver, using $\frac{\beta P_c}{1 - \beta D}$ to replace $P_t$ (for obtaining the lower bound) incurs a little larger gap.

\begin{figure}
\centering
\includegraphics[width=8.5cm]{./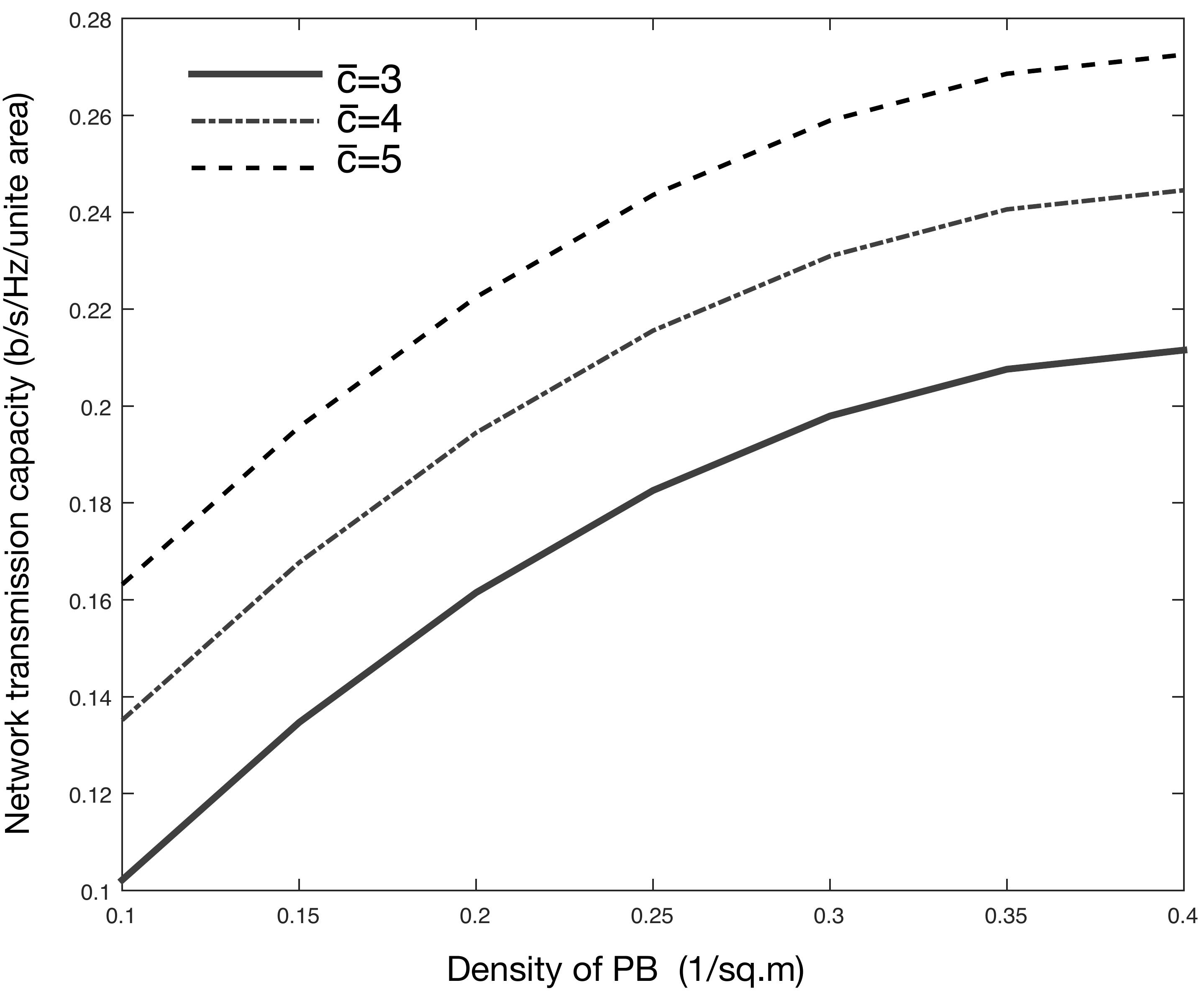}
\caption{The network transmission capacity versus the PB density for the normal WP-BackCom network with a variable expected number of backscatter nodes per cluster,  $\bar{c} \in \{3,4,5\}$}
\label{Fig:TxCap:A}
\end{figure}

The curves of network transmission capacity  versus the PB density are shown in Fig.~\ref{Fig:TxCap:A} for different values of $\bar{c}$.
When the density of PB is relatively small (e.g., $\lambda_{\text{pb}}\in [0.1, 0.25]$) , the network capacity is observed to grow linearly with the PB density. This is aligned with theoretic result in Section~\ref{Section:Capacity} for the regime of almost-full network coverage.  For a large PB density, the capacity saturates as the network becomes dense and interference limited. The linear relation no longer holds since the success probability no longer be approximated as $1$. In addition, it can be observed from Fig.~\ref{Fig:TxCap:A} where increasing  $\bar{c}$ contributes an approximately constant capacity gain insensitive to changes on the PB density.

Last, the effects of the backscatter parameters, namely the reflection coefficient $\beta$ and duty cycle $D$, on the network transmission capacity are shown in Fig.~\ref{Fig:TxCap:B} for different values of $\bar{c}$. From Fig.~\ref{Fig:TxCap:B}(a), it can be observed that the  capacity is a convex function of the duty cycle,  which confirms Remark~\ref{Re:Duty}. The optimal duty cycle is almost identical (about $0.62$) for different values of $\bar{c}$. Furthermore, the capacity grows with the increasing  node density and the gain is the largest at $D\approx 0.62$. Consider the curves of capacity versus reflection coefficient in Fig.~\ref{Fig:TxCap:B}(b). For $\beta$ larger than $0.3$ core, the capacity decreases rapidly as $\beta$ increases due to growing power-outage probability. This range of $\beta$ corresponds to almost-full network coverage and the observation is consistent with  Remark~\ref{Re:Cap}.

\begin{figure}
\centering
\subfigure[Effect of the duty cycle given reflection coefficient $\beta = 0.6$]{\includegraphics[width=8cm]{./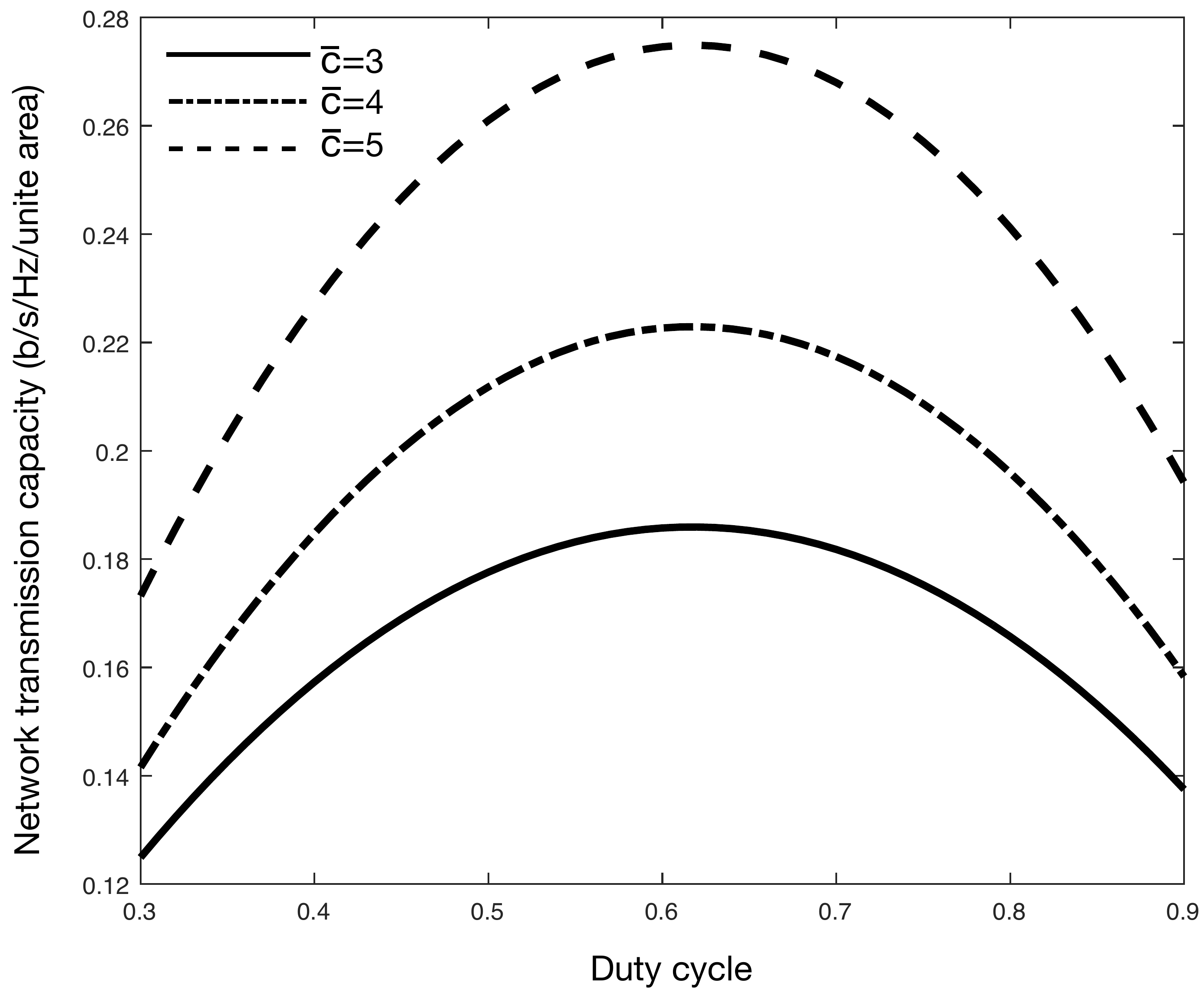}}
\subfigure[Effect of the reflection coefficient given duty cycle $D = 0.4$]{\includegraphics[width=8cm]{./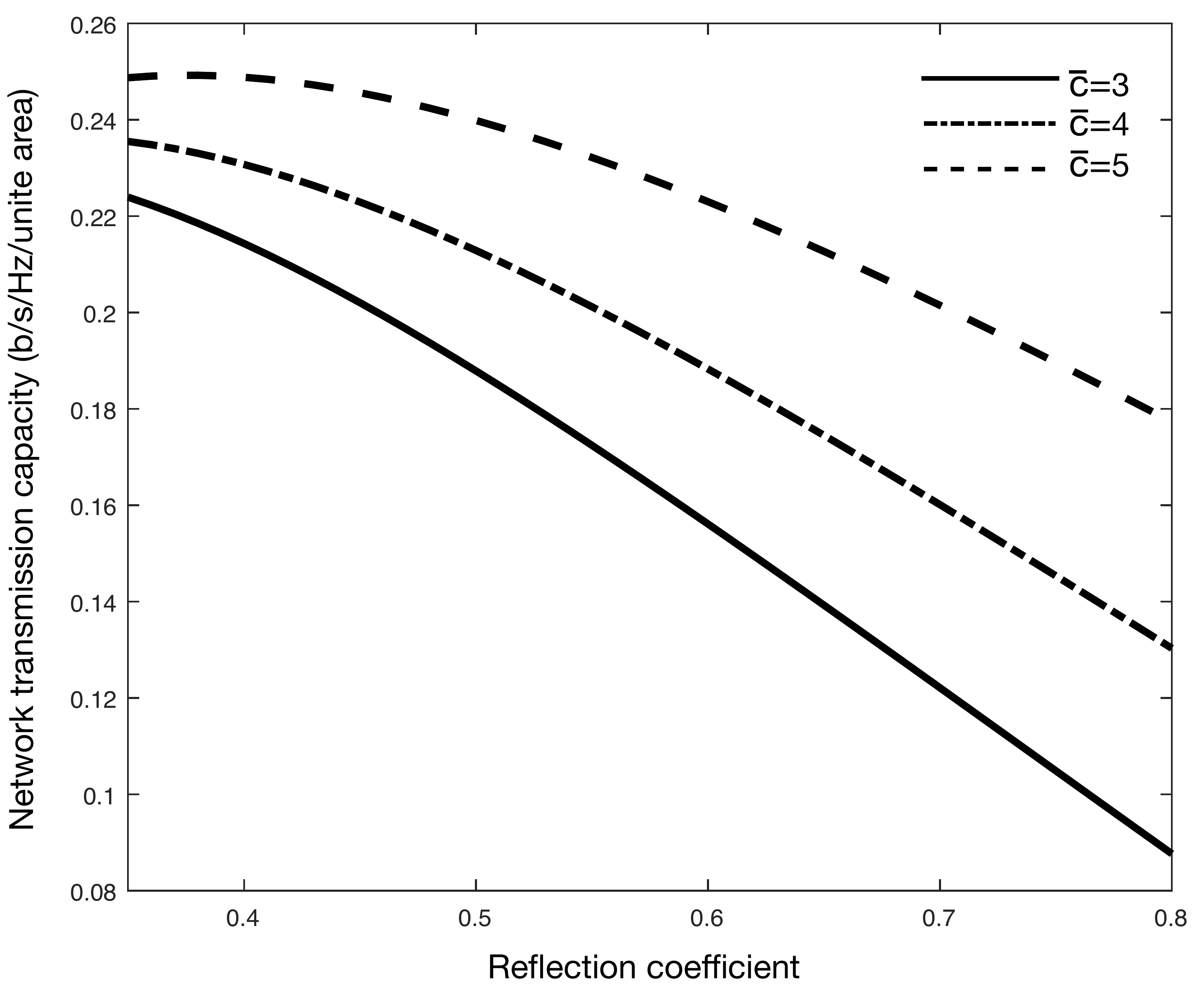}}
\caption{The effects of the duty cycle and reflection coefficient on the network transmission capacity for a variable expected number of backscatter nodes per cluster,  $\bar{c} \in \{3,4,5\}$.}
\label{Fig:TxCap:B}
\end{figure}

\section{Conclusion}
In this work, we have proposed the new network architecture, namely the WP-BackCom network,  for realizing dense backscatter communication networks using WPT enabled by PBs. A large-scale WP-BackCom network has been modeled using the PCP. Applying  stochastic geometry theory, the success probability and the network transmission capacity have  been derived to quantify the performance of network coverage and capacity, respectively. In particular, the results relate the network performance to  the backscatter parameters, namely the duty cycle and the reflection coefficient.

The current work can be extended in several directions. In this work, each PB serves only a single node assuming beamforming. However, PB with isotropic transmission can serve multiple nodes, which makes the network-performance analysis more challenging but can lead to new insight into WP-BackCom network design. Next, following a similar approach as in this work,  other types of backscatter communication networks, such as those based on UWB or ambient RF energy harvesting, can be modeled and studied using stochastic geometry comprising a wide range of spatial point processes  including cluster point processes other than PCP, e.g., Gauss poisson processes and repulsive point processes (Ginibre point processes).  Last, re-generated interference arising from backscattering interference by nodes is omitted in the current work. It is interesting and also important to investigate the effect of interference regeneration on the network performance.

\appendix

\subsection{Proof of Lemma \ref{Lem:IntraInt}} \label{appendix:lemma:1}
Let  $\E_{\Phi}$  denote the expectation with respect to a random process/variable $\Phi$. Consider the typical cluster of the transmitting node process. Given  the typical receiver  located  at $z\in\mathds{R}^2$, the characteristic functional of the intra-cluster interference in \eqref{Eq:IntraInt} is obtained as
\begin{align}
\E\left[e^{-s I_a}\right]  &= \E_{\mathcal{N}_{0}\backslash \{X_0\}, Y_0}\Bigg[ \exp\Bigg(-s \sum\limits_{X\in \mathcal{N}_{0}\backslash \{X_0\} }\beta h_X \ell(|X-Y_0|^{-\alpha_{1}}) |X\!-\!z|^{-\alpha_{2}}\Bigg) \Bigg] \nonumber \\
&\overset{(a)}{=}  \E_{\mathcal{N}_{0}, Y_0}\Bigg[ \exp\Bigg(-s \sum\limits_{X\in \mathcal{N}_{0}}\beta h_X \ell(|X-Y_0|^{-\alpha_{1}}) |X\!-\!z|^{-\alpha_{2}}\Bigg) \Bigg]
 \nonumber \\
& \overset{(b)}{=} \E_{Y_0}\Bigg[ \exp\Bigg(-\bar{c} D \int\limits_{\mathds{R}^2}\Big( 1 - \E_h\Big(\exp\Big(- s \beta h \ell(|x-Y_0|^{-\alpha_{1}})|x-z|^{-\alpha_{2}}\Big)\Big) \times \nn\\
& \qquad \qquad \qquad f(|x - Y_0|)d x\Bigg)\Bigg]\nn\\
& = \E_{Y_0}\Bigg[ \exp\Bigg(\!\!-\bar{c} D \int\limits_{\mathds{R}^2}\frac{1}{1 + \frac{1}{s \beta \ell(|x-Y_0|^{-\alpha_{1}})|x-z|^{-\alpha_{2}}}} f(|x - Y_0|)d x\Bigg)\Bigg]\nn\\
&=\E_{Y_0}\Bigg[ \exp\Bigg(\!\!\!-\bar{c} D \int\limits_{\mathds{R}^2}\frac{1}{1 + \frac{1}{s \beta \ell(|x|^{-\alpha_{1}})|x + Y_0 -z|^{-\alpha_{2}}}}f(|x|)d x\Bigg)\Bigg], \label{appendixA}
\end{align}
where $(a)$ and $(b)$ are obtained by applying  Slivnyak's Theorem an Campbell's Theorem (see e.g., Chapter~4.5 in \cite{33haenggi2012stochasticgeometry}), respectively.  Using the PDF $f(\cdot)$  in \eqref{Eq:PDF:Matern} and \eqref{Eq:PDF:Thomas} and the circuit-constraint function $\ell(\cdot)$ defined before \eqref{Eq:IntPwr:a}, the desired result follows.

\subsection{Proof of Lemma \ref{Lem:InterInt}} \label{appendix:lemma:2}
By applying Slivnyak's Theorem, the characteristic functional of the inter-cluster interference $I_b$ in \eqref{Eq:InterInt} is obtained as
\begin{align}
\E\left[e^{-s I_b}\right] &= \E\Bigg[ \exp\Bigg(-s \sum_{Y\in \Pi_{\text{pb}} }\sum_{X\in \mathcal{N}_{Y} } \beta h_X\ell(|X - Y|^{-\alpha_1})|X-z|^{-\alpha_2} \Bigg) \Bigg] \nonumber \\
&=\E\Bigg[ \prod_{Y\in \Pi_{\text{pb}}}\E\Bigg[\exp\Big(-s \sum_{X\in \mathcal{N}_{Y} } \beta  h_X \ell(|X - Y|^{-\alpha_1})|X-z|^{-\alpha_2} \Big) \Bigg]\Bigg] \nonumber.
\end{align}
The inner expectation focusing on a single cluster  can be derived using similar steps as in the proof of Lemma~\ref{Lem:IntraInt}. As a result,
\begin{align}
\E\left[e^{-s I_b}\right] &= \E\Bigg[  \exp\Bigg(-\sum_{Y\in \Pi_{\text{pb}}}\bar{c} D q(s, Y, z) \Bigg)\Bigg].
\end{align}
where $q(\cdot)$ is defined in Lemma~\ref{Lem:IntraInt}.  Applying Campbell's Theorem gives the desired result.

\subsection{Proof of Lemma \ref{Lem:densePBmodel}} \label{appendix:lemma:5}
Using  the  definition of $I^{'}$ in (\ref{De:InterDense}) and applying Slivnyak's Theorem,
\begin{align}
\E[e^{-sI^{'}}] &= \E\Big[ \exp\Big( -s \beta \sum_{X \in  \Pi_{\text{nd}}\backslash \{X_{0}\}} \Big( \sum_{Y \in \mathcal{N}_{X}} \eta g |Y-X|^{-\alpha_{1}} \Big) h_{X} |X-z|^{-\alpha_{2}} \Big) \Big] \nonumber \\
&= \E\Big[ \prod_{X \in \Pi_{\text{nd}}\backslash \{X_{0}\}} \exp\Big( -s \beta \Big( \sum_{Y \in \mathcal{N}_{X}} \eta g |Y-X|^{-\alpha_{1}} \Big) h_{X} |X-z|^{-\alpha_{2}} \Big) \Big] \nonumber \\
&\overset{(a)}{=}  \exp\Big( -\lambda_{\text{nd}}D \int_{\mathds{R}^2} \Big( 1- \E_{\mathcal{N}_{X}}\E_{h}\Big[ \exp\Big( -s \beta \Big( \sum_{Y \in \mathcal{N}_{X}} \eta g |Y-x|^{-\alpha_{1}} \Big) h |x-z|^{-\alpha_{2}} \Big)\Big] \Big)dx\Big) \nonumber \\
&\overset{(b)}{=} \exp\Big( -\lambda_{\text{nd}}D \int_{\mathds{R}^2} \Big( 1- \nonumber \\
& ~~~~~~~~\exp\Big( -2\pi\bar{m} \int_{0}^{\infty} \Big(1-\E_{h} \Big[ \exp\Big( -s \beta \eta g h |x-z|^{-\alpha_{2}} r^{-\alpha_{1}} \Big)\Big] \Big) f(r)rdr \Big) \Big)dx\Big) \nonumber \\
&= \exp\Big( -\lambda_{\text{nd}}D \int_{\mathds{R}^2} \Big( 1- \exp\Big( -2\pi\bar{m} \int_{0}^{\infty} \frac{1}{1+(s \beta \eta g)^{-1} r^{\alpha_{1}} |x-z|^{\alpha_{2}}}f(r)rdr \Big) \Big)dx\Big)
\end{align}
where both (a) and (b) are obtained by applying Campbell's Theorem. The desired result follows.

\subsection{Proof of Lemma \ref{Lem:SignalDense}} \label{appendix:proposition:2}
For the bound on the power-outage probability  in   \eqref{Eq:Chernoff}, the term $\sum^{N}_{n} d_{n}^{-\alpha_{1}}$ is a compound Poisson process (or equivalently  a shot-noise process) . Then using the result in   \eqref{Eq:Chernoff} and the characteristic functional of a shot-noise process (see e.g., \cite{HaenggiAndrews:StochasticGeometryRandomGraphWirelessNetworks}),
\begin{align}
p_0'  &= \min_{\mu >0}\Big\{ \exp\Big( \frac{\mu P_c}{(1-\beta D) \eta g} - 2\pi \bar{m} \int^{\infty}_{0} \Big( 1- e^{-\mu t^{-\alpha_{1}}}\Big)f(t)t dt\Big) \Big \}.
\end{align}
Note that the function of $\mu$to be minimized is convex. Thus the optimal value of $\mu$, denoted as $\mu^*$, can be found by solving the equation from setting the derivative of the  said function as zero, yielding the result in the lemma statement.

\bibliographystyle{ieeetr}

\begin{thebibliography}{}

\end{thebibliography}


\begin{thebibliography}{10}

\bibitem{Huang:CuttingLastWiress:2014}
K.~Huang and X.~Zhou, ``Cutting last wires for mobile communication by
  microwave power transfer,'' {\em IEEE Commun. Mag.}, vol.~53, pp.~86 -- 93,
  June 2015.

\bibitem{bi:PowerCommunication:2014}
S.~Bi, C.~Ho, and R.~Zhang, ``Wireless powered communication: opportunities and
  challenges,'' {\em IEEE Commun. Mag.}, vol.~53, pp.~117 -- 125, Apr. 2014.

\bibitem{1boyer2014backscatterComm.}
C.~Boyer and S.~Roy, ``Backscatter communication and {RFID}: Coding, energy,
  and {MIMO} analysis,'' {\em IEEE Trans. Commun.}, vol.~62, pp.~770 -- 785,
  Mar. 2014.

\bibitem{HuangLauArXiv:EnablingWPTinCellularNetworks:2013}
K.~Huang and V.~Lau, ``Enabling wireless power transfer in cellular networks:
  Architecture, modelling and deployment,'' {\em IEEE Trans. Wireless Commun.},
  vol.~13, pp.~902--912, Feb. 2014.

\bibitem{42stockman1984communication}
H.~Stockman, ``Communication by means of reflected power,'' {\em in Proc. IRE},
  vol.~36, pp.~1196--1204, Oct. 1948.

\bibitem{umeda2006950}
T.~Umeda, H.~Yoshida, S.~Sekine, Y.~Fujita, T.~Suzuki, and S.~Otaka, ``A
  950-{MH}z rectifier circuit for sensor network tags with 10-m distance,''
  {\em IEEE J. Solid State Circuits}, vol.~41, pp.~35--41, Jan. 2006.

\bibitem{chawla2007overview}
V.~Chawla and D.~Ha, ``An overview of passive {RFID},'' {\em IEEE Comm. Mag.},
  vol.~45, pp.~11--17, Sep. 2007.

\bibitem{G.Yang:backscattercommun.:2015}
G.~Yang, C.~Ho, and Y.~Guan, ``Multi-antenna wireless energy transfer for
  backscatter communication systems,'' {\em (online) Available:
  http://arxiv.org/pdf/1503.04604v1.pdf}, 2015.

\bibitem{Wang:2012:EfficientReliable}
J.~Wang, H.~Hassanieh, D.~Katabi, and P.~Indyk, ``Efficient and reliable
  low-power backscatter networks,'' in {\em Proc. ACM SIGCOMM}, pp.~61--72,
  2012.

\bibitem{saad2014physical}
W.~Saad, X.~Zhou, Z.~Han, and H.~Poor, ``On the physical layer security of
  backscatter wireless systems,'' {\em IEEE Trans. Wireless Commun.}, vol.~13,
  pp.~3442--3451, Jun. 2014.

\bibitem{kimionis2014increased}
J.~Kimionis, A.~Bletsas, and J.~Sahalos, ``Increased range bistatic scatter
  radio,'' {\em IEEE Trans. Commun.}, vol.~62, pp.~1091--1104, Mar. 2014.

\bibitem{dardari2010ultrawide}
D.~Dardari, R.~Errico, C.~Roblin, A.~Sibille, and M.~Win, ``Ultrawide bandwidth
  {RFID}: The next generation?,'' {\em Proc. of the IEEE}, vol.~98,
  pp.~1570--1582, Sep. 2010.

\bibitem{45liu2013ambient}
V.~Liu, A.~Parks, V.~Talla, S.~Gollakota, D.~Wetherall, and J.~Smith, ``Ambient
  backscatter: wireless communication out of thin air,'' in {\em Proc. ACM
  SIGCOMM}, pp.~39--50, 2013.

\bibitem{Kellogg:2014:WIFIBackscatter}
B.~Kellogg, A.~Parks, S.~Gollakota, J.~Smith, and D.~Wetherall, ``{Wi-Fi}
  backscatter: internet connectivity for {RF}-powered devices,'' in {\em Proc.
  of the ACM SIGCOMM}, pp.~607--618, 2014.

\bibitem{Liu2014:Enabling}
V.~Liu, V.~Talla, and S.~Gollakota, ``Enabling instantaneous feedback with
  full-duplex backscatter,'' in {\em Proc. of the ACM MobiCom}, pp.~67--78,
  2014.

\bibitem{HaenggiAndrews:StochasticGeometryRandomGraphWirelessNetworks}
M.~Haenggi, J.~Andrews, F.~Baccelli, O.~Dousse, and M.~Franceschetti,
  ``Stochastic geometry and random graphs for the analysis and design of
  wireless networks,'' {\em IEEE J. of Sel. Areas in Commun.}, vol.~27,
  pp.~1029--1046, Jul. 2009.

\bibitem{GantiHaenggi:OutageClusteredMANET:2009}
R.~Ganti and M.~Haenggi, ``Interference and outage in clustered wireless {A}d
  {H}oc networks,'' {\em IEEE Trans. Info. Theory}, vol.~9, pp.~4067--4086,
  Sep. 2009.

\bibitem{gulati2010statistics}
K.~Gulati, B.~Evans, J.~Andrews, and K.~Tinsley, ``Statistics of co-channel
  interference in a field of poisson and poisson-poisson clustered
  interferers,'' {\em IEEE Trans. Sig. Proc.}, vol.~58, pp.~6207--6222, Dec.
  2010.

\bibitem{chun2015modeling}
Y.~Chun, M.~Hasna, and A.~Ghrayeb, ``Modeling heterogeneous cellular networks
  interference using poisson cluster processes,'' {\em IEEE J. of Sel. Areas in
  Commun.}, vol.~33, pp.~2182--2195, Oct. 2015.

\bibitem{suryaprakash2015modeling}
V.~Suryaprakash, J.~Moller, and G.~Fettweis, ``On the modeling and analysis of
  heterogeneous radio access networks using a poisson cluster process,'' {\em
  IEEE Trans. Wireless Commun.}, vol.~14, pp.~1035--1047, Feb. 2015.

\bibitem{Chong:2016Modeling}
M.~Afshang, H.~Dhillon, and P.~Chong, ``Modeling and performance analysis of
  clustered device-to-device networks,'' {\em IEEE Trans. on Wireless Comm.},
  pp.~4957--4972, Apr. 2016.

\bibitem{che2015spatial}
Y.~Che, L.~Duan, and R.~Zhang, ``Spatial throughput maximization of wireless
  powered communication networks,'' {\em IEEE J. of Sel. Areas in Commun.},
  vol.~33, pp.~1534--1548, Aug. 2015.

\bibitem{krikidis2014simultaneous}
I.~Krikidis, ``Simultaneous information and energy transfer in large-scale
  networks with/without relaying,'' {\em IEEE Trans. Commun.}, vol.~62,
  pp.~900--912, Mar. 2014.

\bibitem{mekikis2014wireless}
P.~Mekikis, A.~Lalos, A.~Antonopoulos, L.~Alonso, and C.~Verikoukis, ``Wireless
  energy harvesting in two-way network coded cooperative communications: a
  stochastic approach for large scale networks,'' {\em IEEE Commun. Letters},
  vol.~18, pp.~1011--1014, Jun. 2014.

\bibitem{Dhillon:2014Fundamental}
H.~Dhillon, Y.~Li, P.~Nuggehalli, Z.~Pi, and J.~Andrews, ``Fundamentals of
  heterogeneous cellular networks with energy harvesting,'' {\em IEEE Trans. on
  Wireless Comm.}, vol.~13, pp.~2782--2797, Apr. 2014.

\bibitem{Hossain:2015Cognitive}
A.~Sakr and E.~Hossain, ``Cognitive and energy harvesting-based {D2D}
  communication in cellular networks: Stochastic geometry modeling and
  analysis,'' {\em IEEE Trans. Commun.}, vol.~63, pp.~1867--1880, Mar. 2015.

\bibitem{Wang:2016Self}
X.~Lu, I.~Flint, D.~Niyato, N.~Privault, and P.~Wang, ``Self-sustainable
  communications with {RF} energy harvesting: {G}inibre point process modeling
  and analysis,'' {\em IEEE J. of Sel. Areas in Commun.}, vol.~34,
  pp.~1518--1535, Apr. 2016.

\bibitem{karthaus2003:fully}
U.~Karthaus and M.~Fischer, ``Fully integrated passive {UHF} {RFID} transponder
  {IC} with 16.7- mu;{W} minimum {RF} input power,'' {\em IEEE J. of
  Solid-State Circuits.}, vol.~38, pp.~1602--1608, {Oct} 2003.

\bibitem{weber2010overview}
S.~Weber, J.~Andrews, and N.~Jindal, ``An overview of the transmission capacity
  of wireless networks,'' {\em IEEE Trans. Commun.}, vol.~58, pp.~3593--3604,
  Dec. 2010.

\bibitem{Baccelli:AlohaProtocolMultihopMANET}
F.~Baccelli, B.~Blaszczyszyn, and P.~Muhlethaler, ``An {ALOHA} protocol for
  multihop mobile wireless networks,'' {\em IEEE Trans. on Info. Theory},
  vol.~52, pp.~421--36, Feb. 2006.

\bibitem{28andrews2011tractable}
J.~Andrews, F.~Baccelli, and R.~Ganti, ``A tractable approach to coverage and
  rate in cellular networks,'' {\em IEEE Trans. Commun.}, vol.~59,
  pp.~3122--3134, Nov. 2011.

\bibitem{33haenggi2012stochasticgeometry}
M.~Haenggi, {\em Stochastic geometry for wireless networks}.
\newblock Cambridge University Press, 2012.

\end{thebibliography}

\end{document}